\documentclass[aps,pra,reprint,superscriptaddress]{revtex4-2}

\usepackage{amsmath,amssymb,graphicx}
\usepackage{url}
\usepackage{amsthm}
\usepackage{orcidlink}
\usepackage{braket}
\usepackage{hyperref}
\hypersetup{breaklinks=true, pdfborder={0 0 0}, colorlinks=true, citecolor=magenta, linkcolor=blue,
    urlcolor=violet}
\usepackage[normalem]{ulem}
\newtheorem{theorem}{Theorem}[section]
\newtheorem{lemma}[theorem]{Lemma}
\newtheorem{corollary}[theorem]{Corollary}
\usepackage{tikz}
\usetikzlibrary{arrows.meta,positioning}
\begin{document}
	
	\title{Complementarity Beyond Definite Causal Order}

	\author{Mohd Asad Siddiqui\;\orcidlink{0000-0001-5003-7571}}
	\email{asad@ctp-jamia.res.in}
	\affiliation{Jawaharlal Nehru Rajkeeya Mahavidyalaya, Sri Vijaya Puram, 744104, Andaman and Nicobar Islands, India}
	
	\author{Md Qutubuddin\;\orcidlink{0000-0002-1294-5154}}
	\email{qutubuddinjmi@gmail.com}
	\affiliation{Beijing Computational Science Research Center, Beijing 100193, China}
	
	\author{Tabish Qureshi\;\orcidlink{0000-0002-8452-1078}}
	\email{tabish@ctp-jamia.res.in}
	\affiliation{Centre for Theoretical Physics, Jamia Millia Islamia, New Delhi 110025, India.}

    
\begin{abstract}
Wave--particle duality is a cornerstone of quantum mechanics, traditionally formulated under definite causal order. We investigate how complementarity is modified when the temporal order of operations is coherently superposed, as in the quantum switch. We show that no universal linear additive complementarity relation exists that simultaneously captures path distinguishability, spatial coherence, and coherence between causal orders. This reveals a fundamental separation between spatial and causal resources, which reside on different subsystems and are therefore not jointly constrained by a single quantum state. While tracing out the order qubit recovers the standard duality relation at the level of the reduced quanton--detector state, coherence between causal orders is not accessible at the level of the reduced description. To capture this contribution, we introduce \emph{causal coherence}, defined as the coherence of the order qubit, which quantifies interference between alternative causal orders and is operationally measurable; we construct explicit processes in which spatial duality is saturated while causal coherence is maximal. We further show that complementarity admits a state-dependent entropic formulation based on incompatible measurements on the causal degree of freedom; unlike generic state-dependent relations, this formulation arises from a universal uncertainty principle and provides a canonical operationally meaningful description. These results establish that complementarity is fundamentally shaped by causal structure and cannot, in general, be fully captured at the level of reduced quantum states alone.
\end{abstract}

\maketitle
    
\section{Introduction}
Wave--particle duality lies at the heart of quantum mechanics and
provides one of the earliest manifestations of complementarity.
In interferometric experiments, complementarity is expressed
through trade--off relations between interference visibility
and which--path information
\cite{WoottersZurek,GreenbergerYasin,Englert,Duerr2001,SiddiquiQureshi2015}.
More recent developments have reformulated this relation using
information--theoretic tools, where the wave nature is quantified
by quantum coherence and the particle nature by path distinguishability
\cite{Bera2015,Bagan2016,QureshiSiddiqui2017,Qureshi2019,SiddiquiQureshi2021}.
These approaches build upon the resource theory of quantum coherence
introduced in \cite{Baumgratz2014,Streltsov2017}, thereby providing
an operational framework that extends naturally to mixed states
and multi--path interferometers.

Extensions incorporating entanglement, quantum memory, and additional
degrees of freedom (such as internal states and environments) further
show that correlations can modify and, in some cases, tighten duality
relations \cite{JakobBergou2007,Bu2018,Roy2022,Sun2025,Banaszek2013,Sharma2020}.

A central assumption underlying these formulations is that the relevant
operations occur in a \emph{definite causal order}. In standard
interferometric scenarios, the which--path interaction and the
interference measurement are applied sequentially within a fixed
temporal structure. This raises the question of how wave--particle
complementarity is reshaped when the temporal order of operations is
coherently superposed.

Quantum theory allows processes in which the temporal order of
operations is not fixed but can exist in a coherent superposition of
causal orders \cite{Oreshkov2012,Chiribella2013,Brukner2014}.
Such processes are described within the process-matrix framework,
which generalizes quantum theory beyond fixed causal structures
\cite{Oreshkov2012}. A paradigmatic physically realizable example is the
quantum switch, in which two operations are applied in a superposition
of causal orders \cite{Chiribella2013}.

Experimental realizations have demonstrated coherent control
of causal order \cite{Procopio2015,Rubino2017,Goswami2018,
Goswami2020,Deng2025}, while theoretical studies have identified
advantages in communication and channel discrimination tasks
\cite{Chiribella2012,Ebler2018,Chiribella2021}.

Despite these developments, the implications of indefinite causal
order for foundational principles---particularly wave--particle
complementarity---remain largely unexplored at a quantitative
and operational level.

One may therefore anticipate a generalized complementarity relation
involving coherence, path distinguishability, and interference between
causal orders. We show, however, that no universal linear additive
complementarity relation exists within the quantum switch framework.
This shows that complementarity does not admit a universal description once the causal structure becomes quantum, revealing a fundamental separation between spatial and causal resources, which reside on different subsystems and are therefore not jointly constrained by a single quantum state.

To investigate this, we formulate wave--particle duality in a quantum
switch, where the temporal order between the which--path interaction
and the interference operation is coherently controlled. Tracing out
the order qubit recovers the standard duality relation at the level of
the reduced quanton--detector state, but makes coherence between causal
orders inaccessible, as it resides in correlations with the order qubit
and is not captured in the reduced description.

To characterize this missing contribution, we introduce \emph{causal
coherence}, defined as the coherence of the order qubit, which
quantifies interference between alternative causal orders and is
operationally accessible via measurements in a superposition basis.

We construct explicit processes in which the standard spatial duality
relation is saturated while causal coherence is simultaneously maximal,
thereby ruling out any universal joint tradeoff.

This failure motivates an entropic formulation of complementarity.
Although wave--particle duality is known to be equivalent to an entropic
uncertainty relation in standard interferometric settings
\cite{Coles2014,Coles2017}, we show that this correspondence does not
extend to scenarios with indefinite causal order. Instead,
complementarity admits a state-dependent entropic uncertainty
formulation based on incompatible measurements on the causal degree of
freedom, thereby incorporating causal structure beyond fixed-order
scenarios.

While state-dependent complementarity relations can in principle be
constructed, they are generally non-unique and lack a canonical
operational interpretation. By contrast, the entropic approach arises
from a universal entropic uncertainty principle and thus provides a
canonical, operationally meaningful description.

These results show that complementarity is fundamentally constrained by
the causal structure of the underlying quantum process and cannot be
fully characterized at the level of reduced quantum states alone,
thereby revealing its intrinsically causal nature beyond spatial
observables.

The paper is organized as follows. In Sec.~\ref{sec:fixed} we review
wave--particle duality for fixed causal order. In
Sec.~\ref{sec:indef} we formulate the quantum switch scenario. In
Sec.~\ref{sec:causal} we introduce causal coherence and establish
its operational meaning. In Sec.~\ref{sec:no_go} we prove the no-go
theorem. In Sec.~\ref{sec:entr} we derive the entropic formulation.
Finally, we conclude in Sec.~\ref{sec:conc}.

	\section{Wave--Particle Duality for Fixed Causal Order}\label{sec:fixed}
	We review the resource--theoretic formulation of wave--particle duality
	for interferometers with a definite temporal order.
	Consider an $n$--path interferometer in which a quanton is prepared in the pure state
	\begin{equation}
		\ket{\Psi}=\sum_{i=1}^n \sqrt{p_i}\,e^{i\phi_i}\ket{\psi_i},
		\qquad
		\sum_i p_i=1 .
	\end{equation}
	where $p_i$ are the probabilities associated with each interferometric path, and $\{\ket{\psi_i}\}$ forms an orthonormal path basis.
	
	Which--path information is extracted by coupling the quanton to a detector
	initially prepared in the reference state $\ket{d_0}$.
	The interaction correlates each path with a detector state,
	\begin{equation}
		\ket{\psi_i}\ket{d_0}\longrightarrow \ket{\psi_i}\ket{d_i}.
	\end{equation}
	After the interaction, the joint quanton--detector state reads
	\begin{equation}
		\ket{\Psi}_{QD}=\sum_i \sqrt{p_i}\,e^{i\phi_i}\ket{\psi_i}\ket{d_i}.
	\end{equation}
	
	The corresponding density operator is
	\begin{equation}
		\rho_{QD}=\sum_{i,j} \sqrt{p_i p_j}\,e^{i(\phi_i-\phi_j)}
		\ket{\psi_i}\bra{\psi_j}\otimes
		\ket{d_i}\bra{d_j}.
        \label{eq:fixed_dop}
	\end{equation}
    Tracing out the detector subsystem yields the reduced state of the quanton
	\begin{equation}
		\rho_Q
		= \operatorname{Tr}_D(\rho_{QD})
		= \sum_{i,j} \sqrt{p_i p_j}\,e^{i(\phi_i-\phi_j)}
		\ket{\psi_i}\bra{\psi_j}
		\langle d_j|d_i\rangle.
	\end{equation}
	The detector overlaps $\langle d_j|d_i\rangle$ dictate the amount of available
	which--path information.
	
	The wave character of the quanton is quantified by the normalized $l_1$-norm of coherence \cite{Bera2015,PaulQureshi2017},
	\begin{equation}
		C
		= \frac{1}{n-1}
		\sum_{i\neq j}
		\sqrt{p_i p_j}\,
		\big|\langle d_j|d_i\rangle\big|,
        \label{eq:coherence}
	\end{equation}
	which quantifies residual spatial superposition in the path basis.
	
	The particle character is quantified by the path distinguishability $D_Q$,
defined operationally as the maximum probability with which the detector states
$\{|d_i\rangle\}$, occurring with prior probabilities $\{p_i\}$, can be
unambiguously distinguished. For an $n$--path interferometer, this quantity can be expressed as~\cite{SiddiquiQureshi2015}
\begin{equation}
	D_Q
	= 1 - \frac{1}{n-1}
	\sum_{i\neq j}
	\sqrt{p_i p_j}\,
	\big|\langle d_j|d_i\rangle\big|.
\end{equation}
This quantity satisfies $0 \le D_Q \le 1$, with $D_Q=1$ corresponding to
perfect which--path information (orthogonal detector states) and $D_Q=0$
corresponding to completely indistinguishable paths.

    Hence, for any interferometer with a fixed causal order \cite{Bera2015},
\begin{equation}
	C + D_Q \leq 1.
	\label{CDQ}
\end{equation}
For pure quanton--detector states the relation is saturated,
$C + D_Q = 1$, while for mixed states the inequality holds.

This trade--off reflects a fundamental limitation imposed
by a definite temporal structure. 

Examples of definite causal order include standard
which--path interferometric setups in which the quanton
is first entangled with a detector and interference
is subsequently measured.

It is instructive to note that even in scenarios where
interference is observed prior to attempts at which--path
detection, complementarity remains intact. For example,
in the experiment of Afshar \textit{et al.}~\cite{afshar2007},
interference was inferred before path information was probed.
Subsequent analyses showed that no genuine which--path
information was available for the detected photons
\cite{Qureshi_2012,Qureshi2023}, and that the standard
duality relations were therefore fully respected.
\section{Indefinite Causal Order and the Quantum Switch}\label{sec:indef}
Having established wave--particle duality under a definite causal structure, we now extend this framework to scenarios in which the temporal order of operations is not fixed but can exist in a coherent superposition. In such situations, an additional physical degree of freedom—the order qubit—becomes relevant, enabling coherence between alternative causal orders. 

Nevertheless, if the order qubit is traced out, the resulting
quanton--detector state reduces to a classical mixture of
fixed causal orders, and the standard spatial duality relation
remains intact at this level of description. 

However, this reduced description does not capture the full
process, which involves coherent superpositions of alternative
causal orders and is therefore causally nonseparable. A proper
description of such processes requires a framework that goes
beyond fixed causal structures, namely the process-matrix
formalism.

Within this framework, a bipartite quantum process involving
two laboratories $A$ and $B$ is described by a process matrix
$W$ acting on the tensor product of their input and output
Hilbert spaces. The probability of outcomes associated with
local operations $\mathcal{M}_A$ and $\mathcal{M}_B$ is given
by the generalized Born rule
\begin{equation}
P(\mathcal{M}_A,\mathcal{M}_B)
=
\operatorname{Tr}\!\left[W\left(M_A \otimes M_B\right)\right],
\end{equation}
where $M_A$ and $M_B$ are the Choi--Jamiołkowski operators
corresponding to the completely positive maps
$\mathcal{M}_A$ and $\mathcal{M}_B$ \cite{Oreshkov2012}.

Process matrices compatible with a definite causal order
can be written as convex mixtures of processes with fixed
order $A\prec B$ and $B\prec A$, whereas those that cannot
admit such a decomposition are termed
\emph{causally nonseparable}. In particular, the quantum switch provides a physically
realizable example of such a causally nonseparable process
\cite{Chiribella2013,Procopio2015,Rubino2017,Goswami2018}.

Standard formulations of wave--particle duality assume a
definite temporal order between the which--path interaction
and the interference operation. We now extend this framework
to situations in which these two operations occur in a
superposition of alternative causal orders,
implemented via a quantum switch, as illustrated
schematically in Fig.~\ref{fig:quantum-switch}.

\begin{figure*}[t]
\centering
\includegraphics[width=.9\textwidth]{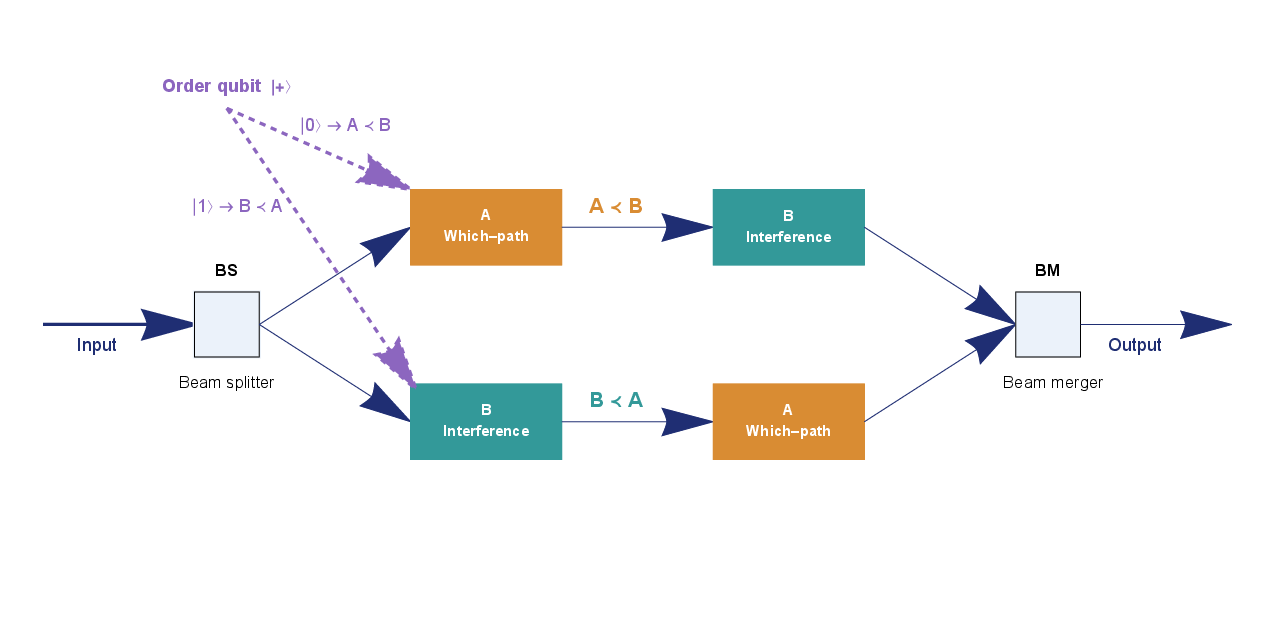}
\vspace{-1em}
\caption{Schematic representation of a quantum switch controlling
the temporal order between a which--path interaction $A$ and an
interference operation $B$ in an interferometric setup.
The order qubit places the two possible orders $A\prec B$ and 
$B\prec A$ in a superposition of alternative causal orders.}
\label{fig:quantum-switch}
\end{figure*}

Let $\rho^{(0)}_{QD}$ denote the initial joint state of the quanton and the which--path detector. Let $U_A$ represent the which--path interaction and $U_B = U_Q \otimes I_D$ the interference operation acting only on the quanton Hilbert space.

The two definite temporal orders correspond to the unitary evolutions
\begin{align}
\rho_{A\prec B}
&=
U_B U_A \rho^{(0)}_{QD} U_A^\dagger U_B^\dagger, \\
\rho_{B\prec A}
&=
U_A U_B \rho^{(0)}_{QD} U_B^\dagger U_A^\dagger .
\end{align}
These states differ whenever $[U_A,U_B]\neq 0$.

The quantum switch coherently controls the order of these operations via an order qubit $O$. Defining
\begin{align}
U_{A\prec B} &= U_B U_A, \\
U_{B\prec A} &= U_A U_B ,
\end{align}
the controlled unitary implementing the switch is
\begin{equation}
U_{\mathrm{sw}} =
U_{A\prec B} \otimes |0\rangle\langle 0|
+
U_{B\prec A} \otimes |1\rangle\langle 1|.
\end{equation}

For an initial order state $\rho_O$, the global state is
\begin{equation}
\rho_{\mathrm{tot}} =
U_{\mathrm{sw}}
\left(\rho^{(0)}_{QD} \otimes \rho_O\right)
U_{\mathrm{sw}}^\dagger .
\end{equation}

Let the order qubit be prepared in a general state in the basis
$\{|0\rangle,|1\rangle\}$,
\begin{equation}
\rho_O = p\,|0\rangle\langle0|
+ (1-p)\,|1\rangle\langle1|
+ \kappa\,|0\rangle\langle1|
+ \kappa^*\,|1\rangle\langle0|,
\end{equation}
with $0 \le p \le 1$ and $|\kappa|^2 \le p(1-p)$.

The state $\rho_{\mathrm{tot}}$ contains both diagonal and
off-diagonal contributions in the order-qubit basis, encoding
classical mixtures and coherent superpositions of alternative
causal orders. Tracing over the order qubit removes the
off-diagonal terms, since
$\operatorname{Tr}_O(|0\rangle\langle1|)=
\operatorname{Tr}_O(|1\rangle\langle0|)=0$,
yielding the reduced quanton--detector state
\begin{equation}
\rho_{QD} = \operatorname{Tr}_O(\rho_{\mathrm{tot}})
= p\,\rho_{A\prec B}
+ (1-p)\,\rho_{B\prec A}.
\label{eq:convexrho}
\end{equation}

Operational wave and particle quantities are thus determined from the reduced state $\rho_{QD}$, which obeys the standard wave--particle duality relation. Coherence between alternative causal orders is not accessible at this level and requires measurements on the order qubit.

We therefore distinguish coherence at different levels of description. We denote by $C$ the coherence for fixed causal order, while $C_q$ denotes the coherence of the reduced quanton state under indefinite causal order, defined as
\[
\rho_Q = \operatorname{Tr}_{D,O}(\rho_{\mathrm{tot}}).
\]

The corresponding wave coherence is
\begin{equation}
C_q =
\frac{1}{n-1} \sum_{i\neq j}
\left| \langle\psi_i|\rho_Q|\psi_j\rangle \right|.
\end{equation}
This is the normalized $l_1$-norm of coherence of $\rho_Q$ in the path basis $\{|\psi_i\rangle\}$, analogous to Eq.~\eqref{eq:coherence}.

The operational path distinguishability $D_Q^{\mathrm{ICO}}$ is defined for the detector states $\{|d_i\rangle\}$ with prior probabilities $\{p_i\}$, as encoded in the reduced quanton--detector state $\rho_{QD}$.

Convexity of the $l_1$-norm of coherence under mixing of density operators~\cite{Baumgratz2014} implies
\begin{equation}
C_q
\le
p\,C^{A\prec B}
+
(1-p)\,C^{B\prec A}.
\end{equation}

Similarly, the optimal success probability for unambiguous
quantum state discrimination is convex under classical mixing
of the corresponding sets of detector states (more generally,
detector ensembles), yielding
\begin{equation}
D_Q^{\mathrm{ICO}} \le
p D_Q^{A\prec B}
+
(1-p) D_Q^{B\prec A}.
\end{equation}
A proof of this property is provided in
Appendix~\ref{app:uqsd_convexity}.

Combining this with the fixed-order wave--particle duality
relations $C^{A\prec B}+D_Q^{A\prec B}\le 1$ and
$C^{B\prec A}+D_Q^{B\prec A}\le 1$, which are saturated for
arbitrary pure states in multipath interferometers~\cite{Bera2015},
we obtain
\begin{equation}
C_q + D_Q^{\mathrm{ICO}} \le 1 .
\end{equation}

Thus, when the order qubit is traced out, the reduced
quanton--detector state obeys the standard two--term
wave--particle duality relation.

The quantities $C$ and $D_Q$ characterize wave and particle behavior under a fixed causal order, while $C_q$ and $D_Q^{\mathrm{ICO}}$ describe the corresponding quantities for the reduced quanton--detector state in the presence of indefinite causal order.

However, this reduced description does not capture coherence
between alternative causal orders. Access to the order qubit
reveals an additional form of coherence associated with
superpositions of causal orders, which will be analyzed in
the next section.
\section{Causal Coherence}\label{sec:causal}
We now quantify coherence between alternative causal orders
and establish its operational significance.

We consider a pure quantum switch state of the form
\begin{equation}
|\Psi_{\mathrm{tot}}\rangle =
\sqrt{p}\,|\Psi_{A\prec B}\rangle|0\rangle
+ e^{i\theta}\sqrt{1-p}\,|\Psi_{B\prec A}\rangle|1\rangle,
\label{eq:switch_state}
\end{equation}
where $|\Psi_{A\prec B}\rangle$ and 
$|\Psi_{B\prec A}\rangle$ are normalized joint
quanton--detector states corresponding to the two definite
causal orders, and $0 \le p \le 1$.

In the process-matrix framework, the quantum switch realizes
a causally nonseparable process in which the control system
(the order qubit) coherently selects the causal structure
between the operations $A$ and $B$. Tracing out the order qubit yields the classical mixture of
fixed causal orders given in Eq.~\eqref{eq:convexrho}.
In contrast, tracing over the quanton--detector degrees of
freedom yields the reduced order-qubit state
\begin{equation}
\rho_O =
\begin{pmatrix}
p & \kappa \\
\kappa^* & 1-p
\end{pmatrix},
\end{equation}
where the off-diagonal element $\kappa$ is determined by the
overlap between the two causal-order states,
\begin{equation}
\kappa = \sqrt{p(1-p)}\,e^{-i\theta}
\langle\Psi_{A\prec B}|\Psi_{B\prec A}\rangle .
\end{equation}

Positivity of $\rho_O$ implies $|\kappa|^2 \le p(1-p)$,
which follows from the Cauchy--Schwarz inequality
$|\langle \Psi_{A\prec B} | \Psi_{B\prec A} \rangle| \le 1$.

\medskip
\noindent
We define causal coherence as the $l_1$-norm of coherence of the order qubit,
\begin{equation}
C_{\mathrm{causal}}
= 2|\kappa|
= 2\sqrt{p(1-p)}
|\langle \Psi_{A\prec B} | \Psi_{B\prec A} \rangle|.
\label{eq:C_causal}
\end{equation}

This quantity characterizes coherence in the causal (order-qubit) degree of freedom and quantifies interference between alternative causal orders, directly analogous to spatial fringe visibility.

From the bound $|\kappa|^2 \le p(1-p)$, it follows that
\[
|\kappa| \le \sqrt{p(1-p)},
\]
and hence
\[
C_{\mathrm{causal}} = 2|\kappa| \le 2\sqrt{p(1-p)} \le 1.
\]
where the last inequality uses $p(1-p)\le \tfrac{1}{4}$. Hence,
\[
0 \le C_{\mathrm{causal}} \le 1.
\]

Causal coherence vanishes when the switch is prepared in a definite causal order ($p=0$ or $p=1$), or when the two causal-order states are orthogonal. It attains its maximum value $C_{\mathrm{causal}}=1$ when the two causal-order states are identical and the switch is prepared in an equal superposition ($p=\tfrac{1}{2}$).

\medskip

To probe interference between the two causal orders, we measure the order qubit in the generalized basis
\begin{equation}
|\pm_\phi\rangle = \frac{1}{\sqrt{2}}
\left( |0\rangle \pm e^{i\phi}|1\rangle \right).
\end{equation}

The corresponding outcome probabilities are
\begin{align}
P_\pm(\phi)
&= \langle \pm_\phi | \rho_O | \pm_\phi \rangle
\nonumber\\
&= \frac12 \left( 1 \pm \bigl( e^{i\phi}\kappa + e^{-i\phi}\kappa^* \bigr) \right).
\label{eq:prob_general}
\end{align}

Writing $\kappa = |\kappa| e^{i\varphi_0}$, with $\varphi_0 = \arg \kappa$, the interference term becomes
\begin{equation}
e^{i\phi}\kappa + e^{-i\phi}\kappa^*
= 2|\kappa|\cos(\phi + \varphi_0).
\end{equation}
Thus,
\begin{equation}
P_\pm(\phi) = \frac12 \left( 1 \pm 2|\kappa|\cos(\phi + \varphi_0) \right).
\label{eq:prob_cosine}
\end{equation}

The phase $\varphi_0=\arg(\kappa)$ determines the phase shift of the interference pattern, i.e., the location of the maxima and minima as a function of $\phi$, while $|\kappa|$ determines the visibility (contrast).

\medskip

The interference visibility between the two causal orders is therefore
\begin{equation}
V_{\mathrm{causal}} = \frac{P_{\max}-P_{\min}}{P_{\max}+P_{\min}} = 2|\kappa|.
\end{equation}

Hence, causal coherence admits a direct operational interpretation as the interference visibility of the order qubit, establishing it as an experimentally observable quantity rather than merely a formal property of the process.

In this work, we formulate complementarity at the level of the underlying quantum process, without conditioning on specific measurement outcomes. Accordingly, causal coherence is defined as the $l_1$-norm of coherence of the reduced order-qubit state, capturing intrinsic coherence between alternative causal orders. This contrasts with spatial coherence, which is defined at the level of the reduced quanton state and characterizes superposition between interferometric paths. Measurements in a superposition basis nevertheless provide an operational means to reveal this coherence via interference visibility; the corresponding post-selected duality relations are analyzed in a subsequent section.

This perspective admits a natural resource-theoretic interpretation.

\paragraph*{Resource-theoretic interpretation}
The quantity $C_{\mathrm{causal}}$ can be understood as a resource
quantifying coherence in the causal (order-qubit) degree of freedom. Free states are density operators on the order qubit that are diagonal in the causal-order basis $\{|0\rangle,|1\rangle\}$,
\[
\rho_O = p|0\rangle\langle0| + (1-p)|1\rangle\langle1|,
\]
which correspond to classical mixtures of definite causal orders. Free operations are incoherent operations with respect to this basis, as defined in Ref.~\cite{Baumgratz2014}. Under such operations, $C_{\mathrm{causal}}$ is a monotone and hence
quantifies the resource associated with coherent superpositions of
causal orders.

When $C_{\mathrm{causal}}=0$, the order-qubit state is diagonal, implying the absence of coherence between alternative causal orders. In this case, the process becomes operationally equivalent to a classical mixture of fixed causal orders. At the level of the reduced quanton--detector state, tracing out the order qubit yields
\[
\rho_{QD}=p\,\rho_{A\prec B}+(1-p)\rho_{B\prec A},
\]
as given in Eq.~\eqref{eq:convexrho}. Such processes are causally separable and do not exhibit advantages associated with indefinite causal order \cite{Chiribella2012,Ebler2018,Chiribella2021}. This is consistent with resource-theoretic approaches, in which coherence between causal structures is the key resource enabling such advantages \cite{Oreshkov2012,Chiribella2013}.

\paragraph*{Experimental accessibility.}
Causal coherence is experimentally accessible in photonic implementations of the quantum switch, where the order qubit coherently controls the temporal order of operations \cite{Procopio2015,Rubino2017,Goswami2018}. Measuring the order qubit in a superposition basis reveals interference between causal orders, from which causal coherence can be directly extracted via the visibility. This establishes $C_{\mathrm{causal}}$ as an experimentally measurable quantity associated with coherent control of temporal order.

The operational relevance of this coherence has also been demonstrated in quantum communication scenarios, where the quantum switch can activate perfect quantum communication from channels with zero quantum capacity \cite{Chiribella2021}, indicating that the same causal-order coherence underlies known operational advantages.

While causal coherence is not accessible at the level of the reduced quanton--detector state, it can be revealed through measurements on the order qubit. Such measurements induce a conditional (post-selected) description of the quanton--detector system, analyzed in detail in Appendix~\ref{app:post_selected_derivation}.

If measurement outcomes are not conditioned upon, i.e., one averages over the post-selected subensembles, the interference terms arising from coherence between causal orders cancel, thereby recovering the reduced state in Eq.~\eqref{eq:convexrho}. Consequently, interference between causal orders is observable only at the level of conditional (post-selected) statistics.

This mechanism is closely analogous to a quantum eraser \cite{Scully1982,Walborn2002}: tracing out the order qubit encodes which-order information in correlations that are inaccessible at the level of the reduced quanton--detector state, thereby suppressing interference, whereas measurements in a superposition basis erase access to which-order information and restore phase coherence between alternative causal orders, revealing interference. Extensions of the quantum eraser concept to nonlocal (spatial) settings under definite causal order have been investigated in \cite{SiddiquiQureshi2016}. In contrast, the present work concerns interference between alternative causal orders, highlighting a distinct form of coherence associated with temporal order.

We emphasize that $C_q$ denotes the coherence of the reduced quanton state obtained after tracing out the order qubit, whereas $C^{(\pm)}$ denotes coherence conditioned on post-selection. These quantities correspond to different operational levels of description.

\paragraph*{Post-selected duality.}
Measurements of the order qubit in a superposition basis induce a conditional (post-selected) description of the quanton--detector system. The resulting states exhibit modified coherence and distinguishability, reflecting interference between alternative causal orders.

Conditioned on the measurement outcomes $\pm$ of the order qubit,
the reduced quanton state takes the form
\begin{equation}
\rho_Q^{(\pm)}
=
\begin{pmatrix}
\frac12 & \gamma_\pm \\
\gamma_\pm^* & \frac12
\end{pmatrix},
\end{equation}
for a symmetric two-path interferometer with equal initial path probabilities, such that post-selection preserves path symmetry.

The off-diagonal element $\gamma_\pm$ admits a decomposition into contributions from classical mixing of fixed causal orders and interference between alternative causal orders, and is given by
\begin{equation}
\gamma_\pm
=
\frac{1}{4\mathcal{N}_\pm}
\left[
C_{\mathrm{mix}}
\pm
\sqrt{p(1-p)}
\big(
e^{i\theta}\Gamma_{10}
+
e^{-i\theta}\Gamma_{01}^*
\big)
\right].
\end{equation}

Here $C_{\mathrm{mix}} = p\,\gamma_{A\prec B} + (1-p)\,\gamma_{B\prec A}$, with
$\gamma_X = \langle d_1^{X} | d_0^{X} \rangle$ for $X \in \{A\prec B,\, B\prec A\}$.
The quantities $\Gamma_{ij}=\langle d_i^{A\prec B}|d_j^{B\prec A}\rangle$ denote overlaps between detector states corresponding to different causal orders.
The normalization factor $\mathcal{N}_\pm$ is given in Appendix~\ref{app:post_selected_derivation}.

The corresponding wave and particle quantities for the two-path case are
\begin{equation}
C^{(\pm)} = 2|\gamma_\pm|,
\qquad
D^{(\pm)} = 1 - 2|\gamma_\pm|,
\end{equation}
which satisfy the complementarity relation
\begin{equation}
C^{(\pm)} + D^{(\pm)} = 1.
\end{equation}

Thus, the standard complementarity relation is recovered within each post-selected ensemble, while causal coherence manifests through interference contributions in $\gamma_\pm$. Complementarity therefore remains valid within each operational setting. This raises the question of whether it can be extended to a universal linear relation incorporating causal coherence, which we address in the next section.

\medskip
\noindent\textbf{Causal duality.}
In analogy with spatial wave--particle duality, the particle-like character of causal order can be quantified operationally via unambiguous quantum state discrimination (UQSD) between the two causal-order states $|\Psi_{A\prec B}\rangle$ and $|\Psi_{B\prec A}\rangle$, occurring with prior probabilities $p$ and $1-p$.

For two pure states, the optimal success probability is given by the Ivanovic--Dieks--Peres bound~\cite{Ivanovic1987,Dieks1988,Peres1988},
\[
D_{\mathrm{causal}}^{\mathrm{UQSD}} = 1 - 2\sqrt{p(1-p)} \left| \langle \Psi_{A\prec B}|\Psi_{B\prec A}\rangle \right|.
\]

Using $C_{\mathrm{causal}} = 2\sqrt{p(1-p)} \left| \langle \Psi_{A\prec B}|\Psi_{B\prec A}\rangle \right|$, we obtain
\begin{equation}
C_{\mathrm{causal}} + D_{\mathrm{causal}}^{\mathrm{UQSD}} = 1,
\end{equation}
which holds for pure causal-order states. For mixed states, the relation is generally replaced by an inequality.

This relation is directly analogous to two-path wave--particle duality, but applies to the temporal (causal-order) degree of freedom.
It highlights the distinct operational role of causal coherence and motivates the investigation of whether a unified complementarity relation with spatial coherence and path distinguishability can exist.

Although causal-order discrimination, as analyzed via minimum-error discrimination in Appendix~\ref{app:causal_discrimination},
is governed by detector-state overlaps $\langle d_j|d_i\rangle$, this dependence can be expressed in terms of overlap between the global causal-order states
$\langle \Psi_{A\prec B} | \Psi_{B\prec A} \rangle$.
Accordingly, the UQSD formulation provides a natural process-level characterization of causal distinguishability—free of errors—thereby elevating the description beyond the detector level.
\section{Failure of Universal Linear Additive Complementarity}
\label{sec:no_go}

The preceding analysis establishes two distinct forms of complementarity:
(i) spatial wave--particle duality between coherence $C_q$ and path distinguishability $D_Q^{\mathrm{ICO}}$, and 
(ii) causal duality between causal coherence $C_{\mathrm{causal}}$ and the corresponding particle-like quantity $D_{\mathrm{causal}}^{\mathrm{UQSD}}$.

A natural question is whether these two forms can be unified into a single tradeoff relation. A straightforward extension of standard duality relations suggests an additive constraint of the form
\begin{equation}
C_q + D_Q^{\mathrm{ICO}} + C_{\mathrm{causal}} \le 1.
\label{eq:additive_attempt}
\end{equation}
However, such a relation implicitly assumes a joint constraint between spatial and causal degrees of freedom, which is not supported in the present setting, since these quantities are defined on different subsystems.

In particular, it assumes that all three quantities are defined on the same reduced quantum state and depend on the same underlying probabilities, an assumption that does not hold here.

Quantum switch processes provide explicit counterexamples: there exist configurations in which the spatial duality relation is saturated,
\[
C_q + D_Q^{\mathrm{ICO}} = 1,
\]
while the causal coherence simultaneously attains its maximal value,
\[
C_{\mathrm{causal}} = 1.
\]
This violates Eq.~\eqref{eq:additive_attempt} and thereby rules out any universal additive extension, establishing that spatial and causal contributions cannot be captured within a single unified complementarity relation.

More generally, nonlinear extensions (e.g., quadratic relations of the Englert type~\cite{Englert}) arise when all relevant quantities are defined on a common quantum state and are jointly constrained by its geometric or information-theoretic structure, which imposes a well-defined normalization constraint on the corresponding observables. This feature underlies standard duality relations.

In the present setting, however, $C_q$, $D_Q^{\mathrm{ICO}}$, and $C_{\mathrm{causal}}$ are defined on different subsystems—the reduced quanton--detector state and the order qubit—and are therefore not jointly constrained by a single underlying quantum state. As a result, there is no operationally meaningful way to impose a universal functional relation among all three quantities. Consequently, neither a unique nonlinear relation nor a universal, state-independent tradeoff can be established.

\medskip

\noindent\textbf{Absence of a triality structure.}
Unlike known triality relations~\cite{Jakob2010,Qian2018,Roy2022}, which are defined on a single quantum state, spatial coherence and path distinguishability arise from the quanton--detector system, whereas causal coherence characterizes the order qubit governing temporal structure. Complementarity in the presence of indefinite causal order therefore does not admit a genuine triality relation, but instead reflects the independence of spatial and causal resources.

\medskip

Although Eq.~\eqref{eq:additive_attempt} is ruled out, one may still consider more general linear relations. The following no-go theorem excludes all such possibilities.

\begin{theorem}[No universal linear additive complementarity]
There exists no universal state-independent inequality of the form
\begin{equation}
C_q + D_Q^{\mathrm{ICO}} + \alpha\, C_{\mathrm{causal}} \le 1 ,
\label{eq:no_go_linear}
\end{equation}
with any constant $\alpha>0$ that holds for all processes realizable via the quantum switch.
\end{theorem}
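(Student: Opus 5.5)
The proof will be by explicit counterexample. It suffices to exhibit a single quantum-switch process with $C_q + D_Q^{\mathrm{ICO}} = 1$ and $C_{\mathrm{causal}} > 0$. For such a process the left-hand side of Eq.~\eqref{eq:no_go_linear} equals $1+\alpha C_{\mathrm{causal}} > 1$, which violates the inequality for every $\alpha>0$ simultaneously. I will aim for a process with $C_{\mathrm{causal}}=1$, matching the configurations announced after Eq.~\eqref{eq:additive_attempt}. Nothing beyond strict positivity of $C_{\mathrm{causal}}$ is actually needed, so the quantifier over $\alpha$ causes no difficulty.

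\textbf{Construction.} I will take a two-path interferometer with equal path probabilities and order qubit in $|+\rangle$, so $p=\tfrac12$ and $\theta=0$. The operations are chosen to commute on the relevant input: $U_A$ is the which-path coupling $\ket{\psi_i}\ket{d_0}\to\ket{\psi_i}\ket{d_i}$, and $U_B=U_Q\otimes I_D$ with $U_Q$ diagonal in the path basis, e.g.\ a relative phase shifter. Since $U_A$ is controlled on the path and $U_Q$ is path-diagonal, $[U_A,U_B]=0$, so $|\Psi_{A\prec B}\rangle=|\Psi_{B\prec A}\rangle$. By Eq.~\eqref{eq:C_causal} this gives $C_{\mathrm{causal}}=2\sqrt{1/4}\cdot 1=1$.

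\textbf{Saturation of spatial duality.} With identical causal-order states, Eq.~\eqref{eq:convexrho} gives $\rho_{QD}=|\Psi_{A\prec B}\rangle\langle\Psi_{A\prec B}|$, which is a pure quanton--detector state of the form of Eq.~\eqref{eq:fixed_dop}. The fixed-order result of Bera et al.\ (Eq.~\eqref{CDQ} and its saturation for pure states) then gives $C_q + D_Q^{\mathrm{ICO}} = 1$. Explicitly, $C_q=|\langle d_1|d_0\rangle|$ and $D_Q^{\mathrm{ICO}}=1-|\langle d_1|d_0\rangle|$ for any detector overlap. This holds for a whole family of overlaps, including both $D_Q^{\mathrm{ICO}}=1$ (orthogonal detectors) and $C_q=1$.

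\textbf{Main obstacle.} The construction itself is simple. The delicate point is arguing that a commuting pair still counts as a legitimate quantum-switch process and is not a degenerate case the theorem should exclude. The theorem quantifies over all processes realizable via the quantum switch, and commuting operations are allowed, so the example qualifies. To preempt the objection, I would also give a perturbed, genuinely noncommuting variant: take $U_Q$ slightly non-diagonal. The global state stays pure, so Eq.~\eqref{eq:convexrho} gives a mixture of two nearly identical pure states. Spatial duality is then close to saturation, say $1-\epsilon$, while $C_{\mathrm{causal}}=1-O(\epsilon)$. The violation $1-\epsilon+\alpha(1-O(\epsilon))>1$ still holds for small $\epsilon$ depending on $\alpha$. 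The exact commuting example already settles all $\alpha>0$ at once, so this variant only serves to show robustness.
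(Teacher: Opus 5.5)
Your proof is correct and is essentially the paper's own argument, including its explicit realization. Both restrict to the commuting sector, with a path-controlled coupling $U_A$ and a path-diagonal phase $U_B$, take $p=\tfrac12$ so that $C_{\mathrm{causal}}=1$, and use purity of $\rho_{QD}$ to saturate fixed-order duality, which gives $1+\alpha>1$.

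You should drop the optional noncommuting robustness variant. Its claim that $C_q+D_Q^{\mathrm{ICO}}$ stays near $1$ assumes the UQSD-based distinguishability is continuous, and it is not. With a slightly non-diagonal $U_Q$ and non-parallel $|d_0\rangle,|d_1\rangle$, the two path-conditional detector states of the mixture in Eq.~\eqref{eq:convexrho} both have support $\mathrm{span}\{|d_0\rangle,|d_1\rangle\}$. Unambiguous discrimination between them is then impossible, so $D_Q^{\mathrm{ICO}}$ jumps to $0$ rather than staying near its unperturbed value.
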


\begin{proof}
Assume, for contradiction, that Eq.~\eqref{eq:no_go_linear} holds universally. In particular, it must hold in the commuting sector of the quantum switch.

Consider two unitary operations $U_A$ and $U_B$ acting on $\mathcal{H}_Q \otimes \mathcal{H}_D$, with $U_B = U_Q \otimes I_D$, such that $[U_A,U_B]=0$. For any pure input state $|\Psi^{(0)}\rangle$, define
\[
\ket{\Psi_{A\prec B}} = U_B U_A \ket{\Psi^{(0)}}, 
\qquad
\ket{\Psi_{B\prec A}} = U_A U_B \ket{\Psi^{(0)}}.
\]
Commutativity implies
\[
\ket{\Psi_{A\prec B}} = \ket{\Psi_{B\prec A}} =: \ket{\Phi},
\]
so that their overlap equals unity.

The corresponding quantum switch state is
\[
|\Psi_{\mathrm{tot}}\rangle
=
\sqrt{p}\,|\Phi\rangle|0\rangle
+
e^{i\theta}\sqrt{1-p}\,|\Phi\rangle|1\rangle,
\]
yielding causal coherence
\[
C_{\mathrm{causal}} = 2\sqrt{p(1-p)},
\]
which is maximal for $p=\tfrac12$.

Since the reduced quanton--detector state is $\rho_{QD}=|\Phi\rangle\langle\Phi|$, the spatial quantities reduce to their fixed-order forms. As $U_B U_A$ is unitary, any state $|\Phi\rangle$ can be realized by an appropriate choice of input.

Choose $|\Phi\rangle$ such that the fixed-order duality relation is saturated, $C_q + D_Q^{\mathrm{ICO}} = 1$. Then, for $p=\tfrac12$,
\[
C_q + D_Q^{\mathrm{ICO}} + \alpha C_{\mathrm{causal}} = 1 + \alpha > 1,
\]
contradicting Eq.~\eqref{eq:no_go_linear}. Hence no such universal linear relation exists.
\end{proof}

\medskip

\paragraph*{Geometric interpretation.}
In the $(C_q + D_Q^{\mathrm{ICO}},\, C_{\mathrm{causal}})$ parameter space, quantum switch processes allow values across the unit square 
\[
0 \le C_q + D_Q^{\mathrm{ICO}} \le 1, 
\qquad 
0 \le C_{\mathrm{causal}} \le 1.
\]
In particular, the point $(1,1)$ is achievable in the commuting sector, as illustrated in Fig.~\ref{fig:duality_geometry}. This point lies outside any linear constraint of the form
\[
C_q + D_Q^{\mathrm{ICO}} + \alpha\, C_{\mathrm{causal}} \le 1 \quad (\alpha>0),
\]
thereby ruling out any universal linear tradeoff relation.

\begin{figure}[t]
\centering
\includegraphics[width=.42\textwidth]{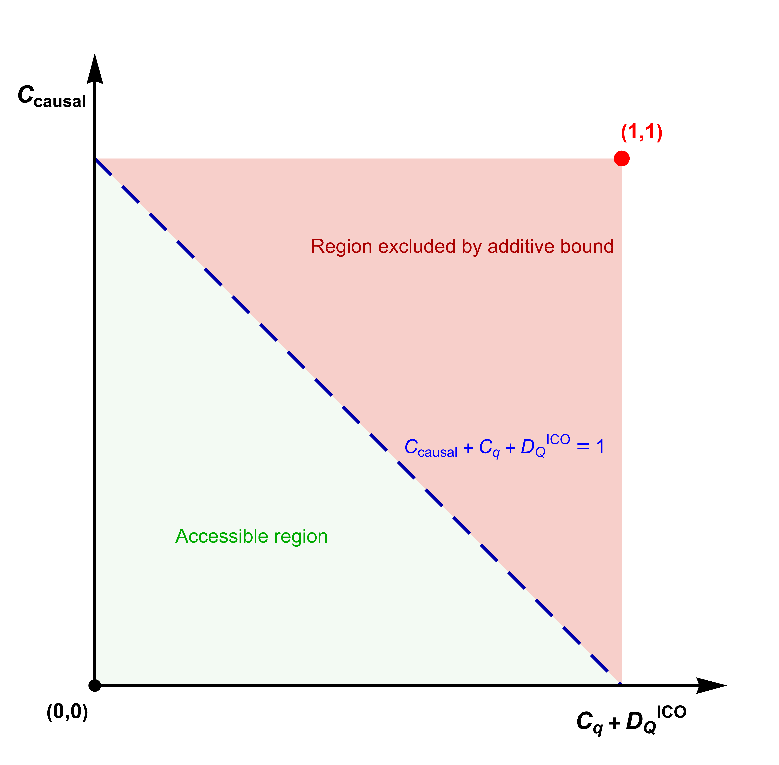}
\vspace{-1em}
\caption{Geometric representation of complementarity in the presence of
indefinite causal order. The accessible region in the
$(C_q + D_Q^{\mathrm{ICO}},\, C_{\mathrm{causal}})$ plane
is the full unit square. In particular, the point $(1,1)$ is
achievable in the commuting sector of the quantum switch,
demonstrating the absence of any universal linear tradeoff
relation between spatial and causal quantities.}
\label{fig:duality_geometry}
\end{figure}

\medskip

\paragraph*{Explicit realization.}
To illustrate the construction, consider a two-path interferometer with quanton basis $\{|0\rangle,|1\rangle\}$. The quanton is initialized in the balanced superposition
\[
|\psi\rangle_Q = \tfrac{1}{\sqrt{2}} (|0\rangle + |1\rangle),
\quad
|\Psi^{(0)}\rangle = |\psi\rangle_Q \otimes |d_0\rangle,
\]
with the detector initially in the state $|d_0\rangle$.

The which-path interaction is given by
\[
U_A = |0\rangle\langle0|\otimes I_D + |1\rangle\langle1|\otimes X_D,
\]
where $|d_0\rangle$ and $|d_1\rangle$ are orthogonal detector states and $X_D$ denotes a bit-flip operator in this basis. This yields the entangled state
\[
|\Psi_A\rangle := U_A |\Psi^{(0)}\rangle =
\tfrac{1}{\sqrt{2}} \left(
|0\rangle|d_0\rangle + |1\rangle|d_1\rangle
\right),
\]
whose reduced quanton state is $\rho_Q = \tfrac12 I$. Consequently, the spatial coherence vanishes while the path distinguishability is maximal,
\[
C_q = 0, \qquad D_Q^{\mathrm{ICO}} = 1.
\]

Now consider an interference operation acting only on the quanton,
\[
U_B = \left(e^{i\phi_0}|0\rangle\langle0| + e^{i\phi_1}|1\rangle\langle1|\right) \otimes I_D.
\]
Since both $U_A$ and $U_B$ are diagonal in the path basis, they commute, $[U_A,U_B]=0$. Consequently, the two causal orders produce identical states,
\[
|\Psi_{A\prec B}\rangle = |\Psi_{B\prec A}\rangle,
\]
so that $|\langle \Psi_{A\prec B} | \Psi_{B\prec A} \rangle| = 1$. Using Eq.~\eqref{eq:C_causal}, this yields
\[
C_{\mathrm{causal}} = 2\sqrt{p(1-p)},
\]
which attains its maximum value $C_{\mathrm{causal}}=1$ for $p=\tfrac12$.

This explicitly realizes a process in which
\[
C_q + D_Q^{\mathrm{ICO}} = 1,
\qquad
C_{\mathrm{causal}} = 1,
\]
demonstrating the mechanism underlying the no-go theorem.

\medskip

These results show that spatial and causal contributions are operationally distinct. Although both spatial duality and causal-order discrimination depend on the same detector overlaps $\langle d_j|d_i\rangle$ (as shown in Appendix~\ref{app:causal_discrimination}), they are not jointly constrained by a single underlying quantum state.

Spatial coherence and path distinguishability arise from the reduced quanton state within a fixed causal order, whereas causal coherence characterizes the order qubit and encodes superpositions of alternative causal structures. This separation explains the absence of a universal tradeoff between spatial and causal quantities and naturally motivates an entropic description of their interplay.
\section{State--Dependent Entropic Complementarity}
\label{sec:entr}

We formulate complementarity in the presence of indefinite causal order using the entropic uncertainty relation with quantum memory \cite{Berta2010}. In standard interferometric settings, entropic formulations of wave--particle duality can be expressed without quantum memory, since complementary observables act on the same quanton \cite{Coles2014}. 

In contrast, in the present setting spatial and causal observables are defined on different subsystems and therefore do not admit a formulation as incompatible measurements on a single system within a memory-free entropic uncertainty framework. Consequently, complementarity does not admit a straightforward memory-free entropic formulation. Instead, incompatible measurements act on the \emph{causal} degree of freedom (the order qubit), while spatial information is encoded in correlations with the quanton--detector system.
Consider the pure quantum switch state defined in Eq.~\eqref{eq:switch_state}. Let $Z_O$ and $X_O$ denote two mutually unbiased measurements on the order qubit,
\begin{align}
Z_O &= \{|0\rangle, |1\rangle\}, \\
X_O &= \{|+\rangle, |-\rangle\},
\end{align}
where $|\pm\rangle = (|0\rangle \pm |1\rangle)/\sqrt{2}$. 
The measurement $Z_O$ reveals definite causal order and thus plays a particle-like role, while $X_O$ probes coherent superpositions of causal orders and captures the wave-like behavior of the causal degree of freedom. Since $Z_O$ and $X_O$ are mutually unbiased bases, one has
\[
c = \max_{i,j} |\langle z_i | x_j \rangle|^2 = \tfrac{1}{2},
\]
and hence $\log_2(1/c) = 1$.

Before proceeding, we introduce the conditional entropies appearing in the uncertainty relation. The von Neumann entropy is defined as $H(\rho) := -\operatorname{Tr}(\rho \log \rho)$. For a projective measurement of the order qubit in the $Z_O$ basis, we define
\begin{equation}
H(Z_O|QD) := H(\rho^{Z_O}_{OQD}) - H(\rho_{QD}),
\end{equation}
where $\Pi_z = |z\rangle\langle z|$ are the projectors, and the post-measurement state is
\begin{equation}
\rho^{Z_O}_{OQD}
= \sum_{z}
(\Pi_z \otimes I_{QD}) \,
\rho_{OQD}
(\Pi_z \otimes I_{QD}).
\end{equation}
An analogous definition applies to the $X_O$ basis. This coincides with the standard definition of conditional entropy used in entropic uncertainty relations with quantum memory \cite{Berta2010}.
\begin{theorem}[State--dependent entropic complementarity]
For the quantum switch state~\eqref{eq:switch_state}, the conditional entropies associated with measurements $Z_O$ and $X_O$ obey
\begin{equation}
H(Z_O|QD) + H(X_O|QD) \ge 1 - H(O),
\label{eq:state_dependent_bound}
\end{equation}
where $H(O) = H(\rho_O)$ denotes the von Neumann entropy of the reduced order qubit.
\end{theorem}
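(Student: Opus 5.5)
The plan is to obtain the bound as a direct consequence of the entropic uncertainty relation with quantum memory of Berta \emph{et al.}~\cite{Berta2010}, taking the order qubit $O$ as the measured system and the quanton--detector system $QD$ as the quantum memory. For any bipartite state $\rho_{OQD}$ and any two measurements on $O$ with maximal overlap $c$, that relation reads
\[
H(Z_O|QD)+H(X_O|QD)\ \ge\ \log_2\frac{1}{c}+H(O|QD),
\]
where $H(O|QD)=H(\rho_{OQD})-H(\rho_{QD})$. The conditional entropies $H(Z_O|QD)$ and $H(X_O|QD)$ are exactly the quantities defined just before the statement, since they are computed from the post-measurement states $\rho^{Z_O}_{OQD}$ and $\rho^{X_O}_{OQD}$. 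We have already shown that $Z_O$ and $X_O$ are mutually unbiased, so $\log_2(1/c)=1$. It therefore remains only to rewrite $H(O|QD)$ for the switch state.

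Because the state~\eqref{eq:switch_state} is a pure state of $OQD$, we have $H(\rho_{OQD})=0$. The Schmidt decomposition across the cut $O|QD$ then gives $H(\rho_{QD})=H(\rho_O)=H(O)$. Combining these,
\[
H(O|QD)=-H(O),
\]
and substituting into the Berta relation yields $H(Z_O|QD)+H(X_O|QD)\ge 1-H(O)$, which is Eq.~\eqref{eq:state_dependent_bound}.

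I would add two remarks after the derivation. First, the bound is nontrivial only when $H(O)<1$. Since $\rho_O$ is the $2\times 2$ matrix given earlier, its eigenvalues are $\tfrac12\bigl(1\pm\sqrt{(2p-1)^2+4|\kappa|^2}\bigr)$. This ties $H(O)$ to $C_{\mathrm{causal}}=2|\kappa|$ and to the causal-order distinguishability, which makes the state dependence explicit. Second, in the commuting sector used in the no-go theorem the global state factorizes, so $H(O)=0$ and the bound reduces to the memory-free Maassen--Uffink form.

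The main obstacle is not technical. It is making sure the paper's definition of $H(Z_O|QD)$ matches the one in~\cite{Berta2010}; it does, because the measurement acts only on $O$. I would also double-check that purity is genuinely used. For a mixed switch state the same argument gives only the weaker bound $1+H(O|QD)$, and $H(O|QD)$ need not equal $-H(O)$. So the statement should be read as restricted to pure states of the form~\eqref{eq:switch_state}, which is exactly how the theorem is phrased.
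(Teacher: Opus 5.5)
Your proposal is correct and follows essentially the same argument as the paper (Appendix~\ref{app:entropic_derivation}). That argument applies the Berta et al.\ uncertainty relation with $O$ as the measured system and $QD$ as the quantum memory, uses $c=\tfrac12$ for the mutually unbiased bases $Z_O,X_O$, and invokes purity of the global switch state to get $H(O|QD)=H(\rho_{OQD})-H(\rho_{QD})=-H(O)$. Your side remarks are also consistent with the paper's discussion: the eigenvalues $\tfrac12\bigl(1\pm\sqrt{(2p-1)^2+4|\kappa|^2}\bigr)$, the reduction to the memory-free Maassen--Uffink form in the commuting sector, and the weaker bound $1+H(O|QD)$ for mixed states.
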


The proof follows directly from the entropic uncertainty relation with quantum memory, together with the purity of the global state (see Appendix~\ref{app:entropic_derivation}).

\medskip

The reduced state of the order qubit is
\begin{equation}
\rho_O =
\begin{pmatrix}
p & \kappa \\
\kappa^* & 1-p
\end{pmatrix},
\qquad
\kappa = \sqrt{p(1-p)}\,e^{-i\theta}
\langle \Psi_{A\prec B} | \Psi_{B\prec A} \rangle .
\end{equation}

The causal coherence is defined as
\begin{equation}
C_{\mathrm{causal}} = 2\sqrt{p(1-p)}
\left|\langle \Psi_{A\prec B} \mid \Psi_{B\prec A} \rangle\right|.
\end{equation}

The eigenvalues of $\rho_O$ are $\lambda_\pm = (1 \pm \Delta)/2$, where
\begin{equation}
\Delta = \sqrt{(2p-1)^2 + C_{\mathrm{causal}}^2}.
\end{equation}

Since $\rho_O$ is a density operator, $0 \le \Delta \le 1$, and its von Neumann entropy is
\begin{equation}
H(O) = h_2\!\left(\frac{1+\Delta}{2}\right),
\end{equation}
where $h_2(x) = -x\log_2 x - (1-x)\log_2(1-x)$ denotes the binary entropy.

The entropy vanishes when $\Delta=1$, i.e., when the reduced order qubit is pure. 
From $\Delta^2 = (2p-1)^2 + 4p(1-p)\,|\langle \Psi_{A\prec B} | \Psi_{B\prec A} \rangle|^2$, this condition implies either $p\in\{0,1\}$ or $|\langle \Psi_{A\prec B} | \Psi_{B\prec A} \rangle|=1$. For $0<p<1$, the latter requires the two causal-order states to coincide up to a global phase. In this regime, the two causal-order states correspond to identical quanton--detector states, so that no which-order information is encoded. As a result, the global state factorizes as $\rho_{OQD} = \rho_O \otimes \rho_{QD}$. Since the global state is pure, this implies that $\rho_{QD}$ is also pure. Consequently, the coherence--distinguishability duality relation $C + D_Q = 1$ is saturated. For $p=\tfrac{1}{2}$, one has maximal causal coherence, $C_{\mathrm{causal}}=1$, allowing it to coexist with saturated spatial complementarity, which underlies the no-go theorem. Definite causal order is recovered in the limiting case $p=0$ or $p=1$.

The entropy is maximal, $H(O)=1$, when $\Delta=0$, corresponding to a maximally mixed order qubit.

Substituting this expression for $H(O)$ into Eq.~\eqref{eq:state_dependent_bound} yields the explicit form of the state--dependent bound
\begin{equation}
H(Z_O|QD) + H(X_O|QD)
\ge
1 - h_2\!\left(\frac{1+\Delta}{2}\right).
\label{eq:explicit_state_bound}
\end{equation}

\medskip

\paragraph*{Interpretation.}

The entropic bound reflects the incompatibility of the measurements $Z_O$ and $X_O$, quantified by the overlap parameter $c$. This is conceptually analogous to the no-go theorem, in which incompatible structures preclude the existence of a universal joint constraint.

Causal coherence contributes directly to the entropy of the order qubit through the off-diagonal element $\kappa$, while spatial coherence and path distinguishability enter indirectly through correlations with the quanton--detector system, which acts as quantum memory.

\medskip

\paragraph*{Extension to mixed switch states.}

For a general mixed state $\rho_{OQD}$, the entropic uncertainty
relation yields
\begin{equation}
H(Z_O|QD) + H(X_O|QD) \ge 1 + H(O|QD),
\end{equation}
which holds without any purity assumption.

In this case, complementarity depends not only on causal coherence but also on correlations between causal and spatial degrees of freedom. In particular, $H(O|QD)$ may become negative in the presence of entanglement, thereby reducing the lower bound in a manner that reflects these correlations.

\medskip

The operational origin of the entropic complementarity relation in Eq.~\eqref{eq:state_dependent_bound} lies in the distinguishability of causal orders, as quantified by the conditional entropy $H(Z_O|QD)$. 
As shown in Appendix~\ref{app:causal_discrimination}, the Helstrom operator governing optimal discrimination between the two causal orders depends explicitly on the detector operators $|d_i\rangle\langle d_j|$, and hence on their overlaps $\langle d_j|d_i\rangle$. 
These overlaps are precisely the quantities that determine spatial coherence and path distinguishability in a fixed-order interferometer. 
Thus, both spatial duality and causal-order discrimination are governed by the same underlying detector correlations.

\section{Conclusion}\label{sec:conc}

We have formulated wave--particle duality in the presence of indefinite causal order within the quantum switch framework. The causal degree of freedom introduces an additional resource—causal coherence—which quantifies interference between alternative causal orders and is operationally accessible via measurements on the order qubit. Importantly, this form of coherence is not reducible to spatial coherence, as it resides in a distinct subsystem associated with temporal structure.

While tracing out the order qubit recovers the standard duality relation between spatial coherence and path distinguishability, access to the order qubit reveals coherence between alternative causal orders. For pure switch states, this causal coherence satisfies a duality relation with the optimal success probability for discriminating between causal orders.

Our main result is a no-go theorem showing that no universal state-independent linear additive complementarity relation involving spatial coherence, path distinguishability, and causal coherence exists within the quantum switch framework. This shows that complementarity does not admit a universal unified algebraic constraint once causal structure becomes quantum, and instead reflects a fundamental separation between spatial and causal resources defined on different subsystems.

The failure of universal tradeoff relations motivates an alternative description. In this regime, complementarity admits a state-dependent entropic formulation arising from incompatible measurements on the causal degree of freedom. The strength of this bound is governed by the entropy of the order qubit: it is maximal when the order qubit is pure and becomes trivial when it is maximally mixed. Notably, even in the regime where this entropic constraint is strongest, maximal causal coherence can coexist with saturated spatial complementarity. Thus, while the entropic formulation remains fully consistent, it does not impose a direct joint tradeoff between spatial and causal quantities, reflecting their operational independence.

These results show that complementarity in quantum processes with indefinite causal order cannot be captured by any universal tradeoff relation, but instead reflects a separation between spatial and causal structures that are operationally linked through correlations while governed by distinct constraints.

Extending this framework to general process matrices and exploring its implications for quantum information processing tasks involving coherent control of causal order represent natural directions for future work.

\appendix
\section{Convexity of the UQSD distinguishability}
\label{app:uqsd_convexity}
We prove that the optimal success probability of
unambiguous quantum state discrimination (UQSD) is convex
under classical mixing of detector ensembles.

For a detector ensemble $\mathcal{E}=\{p_i,\sigma_i\}$,
where $\sigma_i$ denotes the detector states associated
with the interferometric path $i$, the success probability
of UQSD for a POVM $\Pi=\{\Pi_i,\Pi_?\}$ is
\begin{equation}
P_{\mathrm{succ}}(\Pi|\mathcal{E})
=
\sum_i p_i\,\operatorname{Tr}(\sigma_i \Pi_i),
\end{equation}
subject to the UQSD constraints
\[
\operatorname{Tr}(\sigma_i \Pi_j)=0 \quad (i\neq j),
\]
and the POVM completeness relation
\[
\sum_i \Pi_i + \Pi_? = I .
\]
The operator $\Pi_?$ corresponds to the inconclusive outcome.

The optimal success probability is obtained by maximizing
over all POVMs satisfying these constraints,
\begin{equation}
P_{\mathrm{UQSD}}(\mathcal{E})
=
\max_{\Pi}
\sum_i p_i\,\operatorname{Tr}(\sigma_i \Pi_i).
\end{equation}

Consider two detector ensembles $\mathcal{E}_1$ and
$\mathcal{E}_2$. Suppose that $\mathcal{E}_1$ is prepared
with probability $p$ and $\mathcal{E}_2$ with probability
$1-p$, defining a classical mixture of ensembles.

For any fixed POVM $\Pi$, the success probability is linear
in the ensemble, and one has
\begin{equation}
P_{\mathrm{succ}}(\Pi|\mathcal{E})
=
p\,P_{\mathrm{succ}}(\Pi|\mathcal{E}_1)
+
(1-p)\,P_{\mathrm{succ}}(\Pi|\mathcal{E}_2).
\end{equation}

Since $P_{\mathrm{UQSD}}(\mathcal{E}_k)$ is defined as the
maximum over all POVMs, any fixed POVM $\Pi$ satisfies
\[
P_{\mathrm{succ}}(\Pi|\mathcal{E}_k)
\le
P_{\mathrm{UQSD}}(\mathcal{E}_k),
\qquad k\in\{1,2\}.
\]

Therefore,
\begin{equation}
P_{\mathrm{succ}}(\Pi|\mathcal{E})
\le
p\,P_{\mathrm{UQSD}}(\mathcal{E}_1)
+
(1-p)\,P_{\mathrm{UQSD}}(\mathcal{E}_2).
\end{equation}

Since this inequality holds for every POVM $\Pi$, taking the
maximum over all POVMs yields
\begin{equation}
P_{\mathrm{UQSD}}(\mathcal{E})
\le
p\,P_{\mathrm{UQSD}}(\mathcal{E}_1)
+
(1-p)\,P_{\mathrm{UQSD}}(\mathcal{E}_2),
\end{equation}
which shows that the optimal UQSD success probability is
convex under classical mixing of detector ensembles.

In the main text, the reduced quanton--detector state is given by Eq.~\eqref{eq:convexrho}. Applying the convexity of the optimal UQSD success probability to this mixture yields
\begin{equation}
D_Q^{\mathrm{ICO}}
\le
p\,D_Q^{A\prec B}
+
(1-p)\,D_Q^{B\prec A}.
\end{equation}

The formulation of UQSD as a POVM optimization follows the
standard framework of quantum state discrimination
\cite{Chefles2000,Bergou2007}. Since, for any fixed POVM,
the success probability is linear in the ensemble
probabilities, and the optimal success probability is
obtained by maximizing over all POVMs, it follows that
$P_{\mathrm{UQSD}}$ is a convex function of the ensemble
probabilities \cite[Sec.~3.2.3]{Boyd2004}.

\section{Post-selected duality relations}
\label{app:post_selected_derivation}

We consider the pure quantum switch state defined in Eq.~\eqref{eq:switch_state}.
For a balanced two-path interferometer,
\begin{equation}
|\Psi_X\rangle = \frac{1}{\sqrt{2}} \sum_{i=0}^1 |i\rangle |d_i^X\rangle,
\quad X\in\{A\prec B,\,B\prec A\},
\end{equation}
with normalized detector states $\langle d_i^X|d_i^X\rangle=1$.

Projecting the order qubit onto $|\pm\rangle=(|0\rangle\pm|1\rangle)/\sqrt{2}$ gives
\begin{equation}
|\Psi^{(\pm)}\rangle
=
\frac{1}{2\sqrt{\mathcal{N}_\pm}}
\sum_{i=0}^1 |i\rangle |d_i^{(\pm)}\rangle,
\end{equation}
where
\begin{equation}
|d_i^{(\pm)}\rangle
=
\sqrt{p}\,|d_i^{A\prec B}\rangle
\pm
e^{i\theta}\sqrt{1-p}\,|d_i^{B\prec A}\rangle,
\end{equation}
and $\mathcal{N}_\pm$ is the normalization factor,
\begin{equation}
\mathcal{N}_\pm
=
\frac{1}{2}
\left[
1 \pm \sqrt{p(1-p)}\,\mathrm{Re}\!\left(
e^{i\theta}(\Gamma_{00}+\Gamma_{11})
\right)
\right],
\end{equation}
with $\Gamma_{ij}=\langle d_i^{A\prec B}|d_j^{B\prec A}\rangle$.

Tracing over the detector subsystem yields the reduced quanton state
\begin{equation}
\rho_Q^{(\pm)}
=
\frac{1}{4\mathcal{N}_\pm}
\sum_{i,j}
|i\rangle\langle j|\,
\langle d_j^{(\pm)}|d_i^{(\pm)}\rangle.
\end{equation}

The off-diagonal element is given by
\begin{equation}
\gamma_\pm
=
\frac{1}{4\mathcal{N}_\pm}
\left[
C_{\mathrm{mix}}
\pm
\sqrt{p(1-p)}
\big(
e^{i\theta}\Gamma_{10}
+
e^{-i\theta}\Gamma_{01}^*
\big)
\right],
\end{equation}
where $C_{\mathrm{mix}} = p\,\gamma_{A\prec B} + (1-p)\,\gamma_{B\prec A}$, with $\gamma_X = \langle d_1^{X} | d_0^{X} \rangle$ for $X \in \{A\prec B,\, B\prec A\}$.

\medskip

In the symmetric case satisfying
\(
\mathrm{Re}(e^{i\theta}\Gamma_{00})
=
\mathrm{Re}(e^{i\theta}\Gamma_{11}),
\)
the reduced state takes the form
\begin{equation}
\rho_Q^{(\pm)}
=
\begin{pmatrix}
\frac{1}{2} & \gamma_\pm \\
\gamma_\pm^* & \frac{1}{2}
\end{pmatrix}.
\end{equation}

The $l_1$-norm of coherence is $C^{(\pm)} = 2|\gamma_\pm|$. 
In this symmetric two-path case, the distinguishability is given by
\begin{equation}
D^{(\pm)} = 1 - 2|\gamma_\pm|.
\end{equation}
This coincides with the optimal success probability in UQSD for two states and satisfies the coherence--distinguishability duality relation~\cite{Bera2015},
\begin{equation}
C^{(\pm)} + D^{(\pm)} = 1.
\end{equation}

\section{Derivation of the entropic complementarity relation}
\label{app:entropic_derivation}

We derive Eq.~\eqref{eq:state_dependent_bound} using the entropic
uncertainty relation with quantum memory \cite{Berta2010}.

For two measurements $R$ and $S$ on system $O$ in the presence of quantum memory $QD$, one has
\begin{equation}
H(R|QD) + H(S|QD)
\ge \log_2 \frac{1}{c} + H(O|QD),
\end{equation}
where $c=\max_{i,j}|\langle r_i|s_j\rangle|^2$ quantifies the overlap between the measurement bases.

For mutually unbiased bases $Z_O$ and $X_O$, one has $c=\tfrac12$, and hence
\begin{equation}
H(Z_O|QD) + H(X_O|QD) \ge 1 + H(O|QD).
\end{equation}

For a pure global state $\rho_{OQD}$, the conditional entropy satisfies
\begin{equation}
H(O|QD) = -H(O),
\end{equation}
since $H(O|QD) = H(\rho_{OQD}) - H(\rho_{QD})$, $H(\rho_{OQD})=0$, and for a pure bipartite state one has $H(\rho_{QD}) = H(\rho_O)$.

Substituting this into the above inequality yields
\begin{equation}
H(Z_O|QD) + H(X_O|QD) \ge 1 - H(O),
\end{equation}
as stated in Eq.~\eqref{eq:state_dependent_bound}.

\section{Detector correlations and causal-order discrimination}
\label{app:causal_discrimination}

To elucidate the operational origin of the entropic complementarity relation in Eq.~\eqref{eq:state_dependent_bound}, we analyze how detector correlations determine the distinguishability of causal orders.

\begin{lemma}[Detector correlations underlying causal-order discrimination]
Let $U_A$ denote the which--path interaction
\begin{equation}
U_A:\quad \ket{\psi_i}\ket{d_0} \longrightarrow
\ket{\psi_i}\ket{d_i},
\end{equation}
and let $U_B = U_Q \otimes I_D$ act nontrivially only on the quanton. 
Then the Helstrom operator for discriminating the two causal orders,
\begin{equation}
\Delta := p\rho_{A\prec B}-(1-p)\rho_{B\prec A},
\end{equation}
depends explicitly on the detector overlaps $\langle d_j|d_i\rangle$. Consequently, the optimal discrimination probability between causal orders is governed by the same detector correlations that determine spatial coherence and path distinguishability in a fixed-order interferometer.
\end{lemma}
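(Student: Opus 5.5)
The plan is to compute $\Delta$ explicitly by propagating a general pure input through both orderings, and then show that every entry of $\Delta$ carries detector-overlap factors. I will take the input as $\ket{\Psi^{(0)}}=\sum_i \sqrt{p_i}\,e^{i\phi_i}\ket{\psi_i}\ket{d_0}$, as in Sec.~\ref{sec:fixed}. I will expand $U_Q\ket{\psi_i}=\sum_k u_{ki}\ket{\psi_k}$ in the path basis.

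For the order $A\prec B$, applying $U_A$ first gives $\sum_i\sqrt{p_i}e^{i\phi_i}\ket{\psi_i}\ket{d_i}$, and then $U_B$ yields
\[
\ket{\Psi_{A\prec B}}=\sum_{i,k}\sqrt{p_i}\,e^{i\phi_i}u_{ki}\ket{\psi_k}\ket{d_i}.
\]
For the order $B\prec A$, $U_Q$ acts first, and then $U_A$ tags each resulting path $k$ with $\ket{d_k}$:
\[
\ket{\Psi_{B\prec A}}=\sum_{i,k}\sqrt{p_i}\,e^{i\phi_i}u_{ki}\ket{\psi_k}\ket{d_k}.
\]
Forming the projectors and inserting them into $\Delta$ gives
\[
\Delta=\sum_{k,l}\ket{\psi_k}\bra{\psi_l}\otimes\Big[p\sum_{i,j}c_{ki}c_{lj}^*\ket{d_i}\bra{d_j}-(1-p)\Big(\sum_{i,j}c_{ki}c_{lj}^*\Big)\ket{d_k}\bra{d_l}\Big],
\]
where $c_{ki}=\sqrt{p_i}e^{i\phi_i}u_{ki}$. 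This shows explicitly that $\Delta$ is built from the detector operators $\ket{d_i}\bra{d_j}$.

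Next, I will connect $\Delta$ to the overlaps $\langle d_j|d_i\rangle$. The Helstrom success probability is $\tfrac12(1+\|\Delta\|_1)$, and $\|\Delta\|_1$ is determined by the spectrum of $\Delta$. That spectrum depends only on unitary invariants of the vectors $\ket{d_i}$, namely their Gram matrix $G_{ij}=\langle d_i|d_j\rangle$. To make this concrete, I will express the traces of powers, starting with
\[
\operatorname{Tr}\Delta^2=p^2+(1-p)^2-2p(1-p)\,|\langle\Psi_{A\prec B}|\Psi_{B\prec A}\rangle|^2 .
\]
The overlap in this expression is
\[
\langle\Psi_{A\prec B}|\Psi_{B\prec A}\rangle=\sum_{i,k}c_{ki}^*\,\sum_{j}c_{kj}\,\langle d_i|d_k\rangle,
\]
which is manifestly a function of the Gram entries. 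Since $\Delta$ has rank at most two and its trace $2p-1$ is fixed, its two nonzero eigenvalues are determined by $\operatorname{Tr}\Delta$ and $\operatorname{Tr}\Delta^2$. This gives the closed form
\[
P_{\rm Helstrom}=\tfrac12\Big(1+\sqrt{1-4p(1-p)|\langle\Psi_{A\prec B}|\Psi_{B\prec A}\rangle|^2}\Big).
\]
The optimal discrimination probability is therefore governed by the same overlaps that enter $C$ in Eq.~\eqref{eq:coherence} and $D_Q$.

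The main obstacle is interpretive rather than computational: "depends explicitly" must be made precise. I will state it as dependence through the Gram matrix $G$. I will also note the degenerate case in which the $u_{ki}$ make the dependence trivial, for example diagonal $U_Q$, where the two orders coincide and $\Delta$ reduces to $(2p-1)\ket{\Phi}\bra{\Phi}$. That case is consistent with the commuting-sector example in Sec.~\ref{sec:no_go}.
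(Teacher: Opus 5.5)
Your proof is correct, and it takes a more explicit, quantitative route than the paper.

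The paper's argument is qualitative. It notes that after $U_A$ the state is built from $\ket{\psi_i}\bra{\psi_j}\otimes\ket{d_i}\bra{d_j}$, and that $U_B=U_Q\otimes I_D$ leaves the detector factors untouched, so $\Delta$ ``inherits'' them. It then invokes $\operatorname{Tr}_D(\ket{d_i}\bra{d_j})=\langle d_j|d_i\rangle$ to conclude that $\|\Delta\|_1$ is governed by the overlaps. That last step is heuristic, because the trace norm is taken on all of $QD$, not after tracing out $D$. The paper also blurs the difference between the two orderings: rotated paths carry $\ket{d_i}$ for $A\prec B$, but matched labels $\ket{d_k}$ for $B\prec A$. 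You compute both orderings correctly.

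What your route adds:
\begin{itemize}
\item Invariance of the spectrum under a common detector unitary turns ``depends on the overlaps'' into the precise claim that it ``is a function of the Gram matrix''. This step also works for mixed quanton inputs.
\item Using $\operatorname{rank}\Delta\le 2$, $\operatorname{Tr}\Delta=2p-1$ and $\operatorname{Tr}\Delta^2$ gives a closed form. It reads $P_{\mathrm{guess}}=\tfrac12\bigl(1+\sqrt{1-C_{\mathrm{causal}}^2}\bigr)$, which links the lemma directly to Eq.~\eqref{eq:C_causal}.
\item It exposes the commuting (diagonal $U_Q$) case, where the dependence on the overlaps is trivial. The paper's statement omits this caveat.
\end{itemize}

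One small gap: when you pass from the eigenvalues to $\|\Delta\|_1$, state that $\lambda_1\lambda_2=-p(1-p)\bigl(1-|\langle\Psi_{A\prec B}|\Psi_{B\prec A}\rangle|^2\bigr)\le 0$, so that $\|\Delta\|_1=|\lambda_1-\lambda_2|$.

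The paper's argument is shorter and does not use the rank-two pure-state structure. In your approach only the closed form needs purity.
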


\begin{proof}
The two definite-order processes generated by $U_A$ and $U_B$ are
\begin{align}
\rho_{A\prec B} & = U_B U_A \rho^{(0)}_{QD} U_A^\dagger U_B^\dagger,\\
\rho_{B\prec A} & = U_A U_B \rho^{(0)}_{QD} U_B^\dagger U_A^\dagger .
\end{align}

After the which--path interaction $U_A$, the joint quanton--detector state takes the form
\begin{equation}
\rho_{QD} =
\sum_{i,j} \sqrt{p_i p_j}\, e^{i(\phi_i-\phi_j)}
|\psi_i\rangle\langle\psi_j|
\otimes
|d_i\rangle\langle d_j|,
\end{equation}
which coincides with the fixed-order state in Eq.~\eqref{eq:fixed_dop}.

Since $U_B = U_Q \otimes I_D$ acts trivially on the detector Hilbert space, the operators $|d_i\rangle\langle d_j|$ remain invariant under $U_B$ and $U_B^\dagger$. Consequently, both $\rho_{A\prec B}$ and $\rho_{B\prec A}$ retain tensor factors of the form
\[
|\psi_i\rangle\langle\psi_j|\otimes|d_i\rangle\langle d_j|.
\]

Therefore, the Helstrom operator
\[
\Delta = p\rho_{A\prec B}-(1-p)\rho_{B\prec A}
\]
inherits these detector operators. Tracing over the detector yields
\[
\operatorname{Tr}_D\!\left(|d_i\rangle\langle d_j|\right)
= \langle d_j|d_i\rangle,
\]
so that $\Delta$ depends explicitly on the overlaps $\langle d_j|d_i\rangle$.

Since the optimal discrimination probability is determined by the trace norm $\|\Delta\|_1$, it follows that causal-order distinguishability is governed by the same overlaps $\langle d_j|d_i\rangle$ that determine spatial coherence and path distinguishability in the fixed-order interferometer.
\end{proof}

\begin{corollary}[Operational interpretation]
Let $\rho_{QD}=\operatorname{Tr}_O(|\Psi_{\mathrm{tot}}\rangle\langle\Psi_{\mathrm{tot}}|)$ 
be the reduced quanton--detector state defined in Eq.~\eqref{eq:convexrho}. 
A projective measurement of $Z_O$ on the order qubit prepares the classical--quantum ensemble 
$\{\, p,\, \rho_{A\prec B};\; 1-p,\, \rho_{B\prec A} \,\}$ on $QD$.

The optimal probability for minimum-error discrimination of the causal order, given access to $QD$, is therefore given by the Helstrom bound~\cite{Helstrom1976,Bergou2010},
\begin{equation}
P_{\mathrm{guess}}(Z_O|QD)
=
\frac12
\left(
1 +
\left\|
p\rho_{A\prec B}
-
(1-p)\rho_{B\prec A}
\right\|_1
\right),
\label{eq:helstrom_operator}
\end{equation}
where $\|\cdot\|_1$ denotes the trace norm.
\end{corollary}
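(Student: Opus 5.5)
The proof is a direct operational reduction. A projective measurement of $Z_O$ on the order qubit, applied to $|\Psi_{\mathrm{tot}}\rangle\langle\Psi_{\mathrm{tot}}|$ in Eq.~\eqref{eq:switch_state}, produces the dephased state
\begin{equation}
\rho^{Z_O}_{OQD} = p\,|0\rangle\langle0|\otimes\rho_{A\prec B} + (1-p)\,|1\rangle\langle1|\otimes\rho_{B\prec A}.
\end{equation}
The cross terms $|0\rangle\langle1|$ are removed by the projectors $\Pi_z$. This state is exactly a classical--quantum state. Conditioned on outcome $z$, the $QD$ system is left in $\rho_{A\prec B}$ or $\rho_{B\prec A}$ with probabilities $p$ and $1-p$. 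This establishes the first claim, that $Z_O$ prepares the ensemble $\{p,\rho_{A\prec B};\,1-p,\rho_{B\prec A}\}$ on $QD$.

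Next, the task of guessing $z$ from access to $QD$ is a binary hypothesis test with priors $p$ and $1-p$. Any strategy is a two-outcome POVM $\{M_0,M_1=I-M_0\}$ on $QD$, and its success probability is
\begin{equation}
P_{\mathrm{succ}} = p\operatorname{Tr}(M_0\rho_{A\prec B}) + (1-p)\operatorname{Tr}(M_1\rho_{B\prec A}) = (1-p) + \operatorname{Tr}(M_0\Delta),
\end{equation}
where $\Delta$ is the Helstrom operator of the preceding Lemma. Maximizing over $0\le M_0\le I$ selects the projector onto the positive part of $\Delta$. This gives $\operatorname{Tr}(\Delta_+)$.

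The key step is then the trace identities. We have $\operatorname{Tr}\Delta = 2p-1$ and $\|\Delta\|_1 = \operatorname{Tr}\Delta_+ + \operatorname{Tr}\Delta_-$. Together these give $\operatorname{Tr}\Delta_+ = \tfrac12(\|\Delta\|_1 + 2p-1)$. Substituting, $P_{\mathrm{guess}} = 1-p + \tfrac12(\|\Delta\|_1+2p-1) = \tfrac12(1+\|\Delta\|_1)$, which is Eq.~\eqref{eq:helstrom_operator}. This is the standard Helstrom bound, so one may instead simply cite \cite{Helstrom1976,Bergou2010} once the classical--quantum structure is established.

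The only delicate point is making sure that ``access to $QD$'' refers to the post-$Z_O$ conditional states and not to the coherent global state. Because the measurement is in the causal-order basis, the coherences $\kappa$ play no role. By the preceding Lemma, $\|\Delta\|_1$ then inherits its dependence on the detector overlaps $\langle d_j|d_i\rangle$, which supplies the operational interpretation.
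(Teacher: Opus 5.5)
Your proposal is correct and follows the same route as the paper: the $Z_O$ measurement dephases the switch state into the classical--quantum ensemble $\{p,\rho_{A\prec B};\,1-p,\rho_{B\prec A}\}$, and the Helstrom bound then gives Eq.~\eqref{eq:helstrom_operator}. The paper simply cites Helstrom for the second step, whereas you rederive it from $\max_{0\le M_0\le I}\operatorname{Tr}(M_0\Delta)=\operatorname{Tr}\Delta_+$ together with $\operatorname{Tr}\Delta=2p-1$ (and the implicit $\operatorname{Tr}\Delta=\operatorname{Tr}\Delta_+-\operatorname{Tr}\Delta_-$), which is a valid self-contained supplement.
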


By the preceding lemma, the Helstrom operator
$p\rho_{A\prec B}-(1-p)\rho_{B\prec A}$ depends explicitly on the detector overlaps $\langle d_j|d_i\rangle$, thereby linking optimal discrimination of causal orders to the same detector correlation structure that governs fixed-order wave--particle duality.

Accordingly, discrimination of the two causal orders is formulated as a minimum-error state discrimination problem. This is operationally distinct from the path distinguishability defined earlier via unambiguous quantum state discrimination (UQSD), as these correspond to different notions of distinguishability. Nevertheless, both are governed by the same detector overlaps $\langle d_j|d_i\rangle$ that encode correlations between the quanton and detector.

Together with Eq.~\eqref{eq:explicit_state_bound}, this establishes a genuinely \emph{causal} form of complementarity: increased distinguishability of the causal order (corresponding to smaller $H(Z_O|QD)$) is accompanied by increased uncertainty in measurements performed in a superposition basis. Unlike standard wave--particle duality, this complementarity is not algebraic but entropic in nature, arising from the incompatibility of measurements on the causal degree of freedom associated with alternative temporal structures. This demonstrates that complementarity in indefinite causal order does not impose a joint constraint on spatial and causal observables, but instead manifests as an information-theoretic constraint governing incompatible measurements of the causal structure.
 \bibliography{wpc_ico}

@article{WoottersZurek,
	title = {Complementarity in the double-slit experiment: Quantum nonseparability and a quantitative statement of Bohr's principle},
	author = {Wootters, William K. and Zurek, Wojciech H.},
	journal = {Phys. Rev. D},
	volume = {19},
	issue = {2},
	pages = {473--484},
	numpages = {0},
	year = {1979},
	month = {Jan},
	publisher = {American Physical Society},
	doi = {10.1103/PhysRevD.19.473},
	url = {https://link.aps.org/doi/10.1103/PhysRevD.19.473}
}

@article{GreenbergerYasin,
	title = {Simultaneous wave and particle knowledge in a neutron interferometer},
	journal = {Physics Letters A},
	volume = {128},
	number = {8},
	pages = {391-394},
	year = {1988},
	issn = {0375-9601},
	doi = {https://doi.org/10.1016/0375-9601(88)90114-4},
	url = {https://www.sciencedirect.com/science/article/pii/0375960188901144},
	author = {Daniel M. Greenberger and Allaine Yasin}
}

@article{Englert,
	title = {Fringe Visibility and Which-Way Information: An Inequality},
	author = {Englert, Berthold-Georg},
	journal = {Phys. Rev. Lett.},
	volume = {77},
	issue = {11},
	pages = {2154--2157},
	numpages = {0},
	year = {1996},
	month = {Sep},
	publisher = {American Physical Society},
	doi = {10.1103/PhysRevLett.77.2154},
	url = {https://link.aps.org/doi/10.1103/PhysRevLett.77.2154}
}

@article{SiddiquiQureshi2015,
author = {Asad Siddiqui, Mohd and Qureshi, Tabish},
title = {Three-slit interference: A duality relation},
journal = {Progress of Theoretical and Experimental Physics},
volume = {2015},
number = {8},
pages = {083A02},
year = {2015},
month = {08},
issn = {2050-3911},
doi = {10.1093/ptep/ptv112},
url = {https://doi.org/10.1093/ptep/ptv112},
}

@article{JakobBergou2007,
  title = {Complementarity and entanglement in bipartite qudit systems},
  author = {Jakob, Matthias and Bergou, J\'anos A.},
  journal = {Phys. Rev. A},
  volume = {76},
  issue = {5},
  pages = {052107},
  numpages = {15},
  year = {2007},
  month = {Nov},
  publisher = {American Physical Society},
  doi = {10.1103/PhysRevA.76.052107},
  url = {https://link.aps.org/doi/10.1103/PhysRevA.76.052107}
}

@article{Baumgratz2014,
	title = {Quantifying Coherence},
	author = {Baumgratz, T. and Cramer, M. and Plenio, M. B.},
	journal = {Phys. Rev. Lett.},
	volume = {113},
	issue = {14},
	pages = {140401},
	numpages = {5},
	year = {2014},
	month = {Sep},
	publisher = {American Physical Society},
	doi = {10.1103/PhysRevLett.113.140401},
	url = {https://link.aps.org/doi/10.1103/PhysRevLett.113.140401}
}

@article{Streltsov2017,
title = {Colloquium: Quantum coherence as a resource},
author = {Streltsov, Alexander and Adesso, Gerardo and Plenio, Martin B.},
journal = {Rev. Mod. Phys.},
volume = {89},
issue = {4},
pages = {041003},
numpages = {34},
year = {2017},
month = {Oct},
publisher = {American Physical Society},
doi = {10.1103/RevModPhys.89.041003},
url = {https://link.aps.org/doi/10.1103/RevModPhys.89.041003}
}

@article{Bera2015,
	title = {Duality of quantum coherence and path distinguishability},
	author = {Bera, Manabendra Nath and Qureshi, Tabish and Siddiqui, Mohd Asad and Pati, Arun Kumar},
	journal = {Phys. Rev. A},
	volume = {92},
	issue = {1},
	pages = {012118},
	numpages = {6},
	year = {2015},
	month = {Jul},
	publisher = {American Physical Society},
	doi = {10.1103/PhysRevA.92.012118},
	url = {https://link.aps.org/doi/10.1103/PhysRevA.92.012118}
}

@article{QureshiSiddiqui2017,
title = {Wave–particle duality in N-path interference},
author = {Tabish Qureshi and Mohd Asad Siddiqui},
journal = {Annals of Physics},
volume = {385},
pages = {598-604},
year = {2017},
issn = {0003-4916},
doi = {https://doi.org/10.1016/j.aop.2017.08.015},
url = {https://www.sciencedirect.com/science/article/pii/S000349161730235X}
}

@article{Coles2017,
	title = {Entropic uncertainty relations and their applications},
	author = {Coles, Patrick J. and Berta, Mario and Tomamichel, Marco and Wehner, Stephanie},
	journal = {Rev. Mod. Phys.},
	volume = {89},
	issue = {1},
	pages = {015002},
	numpages = {58},
	year = {2017},
	month = {Feb},
	publisher = {American Physical Society},
	doi = {10.1103/RevModPhys.89.015002},
	url = {https://link.aps.org/doi/10.1103/RevModPhys.89.015002}
	}

@article{Qureshi2019,
  title = {Interference visibility and wave-particle duality in multipath interference},
  author = {Qureshi, Tabish},
  journal = {Phys. Rev. A},
  volume = {100},
  issue = {4},
  pages = {042105},
  numpages = {4},
  year = {2019},
  month = {Oct},
  publisher = {American Physical Society},
  doi = {10.1103/PhysRevA.100.042105},
  url = {https://link.aps.org/doi/10.1103/PhysRevA.100.042105}
}

@article{Bu2018,
doi = {10.1088/1751-8121/aa9b4f},
url = {https://doi.org/10.1088/1751-8121/aa9b4f},
year = {2018},
month = {jan},
publisher = {IOP Publishing},
volume = {51},
number = {8},
pages = {085304},
author = {Bu, Kaifeng and Li, Lu and Wu, Junde and Fei, Shao-Ming},
title = {Duality relation between coherence and path information in the presence of quantum memory},
journal = {Journal of Physics A: Mathematical and Theoretical}
}

@article{Sun2025,
	title = {Complementarity relations in a multipath interferometer with quantum memory},
	author = {Sun, Yue and Zhao, Ming-Jing and Li, Peng-Tong},
	journal = {Phys. Rev. A},
	volume = {112},
	issue = {1},
	pages = {012427},
	numpages = {10},
	year = {2025},
	month = {Jul},
	publisher = {American Physical Society},
	doi = {10.1103/ms72-2mzb},
	url = {https://link.aps.org/doi/10.1103/ms72-2mzb}
}

@article{Banaszek2013,
  author  = {Banaszek, Konrad and Horodecki, Pawe{\l} and Karpi\'{n}ski, Micha{\l} and Radzewicz, Czes{\l}aw},
  title   = {Quantum mechanical which-way experiment with an internal degree of freedom},
  journal = {Nature Communications},
  volume  = {4},
  pages   = {2594},
  year    = {2013},
    doi          = {10.1038/ncomms3594},
  publisher    = {Nature Publishing Group}
}

@article{Oreshkov2012,
		author  = {Oreshkov, Ognyan and Costa, Fabio and Brukner, {\v{C}}aslav},
		title   = {Quantum correlations with no causal order},
		journal = { Nature Communications},
		volume  = {3},
		pages   = {1092},
		year    = {2012},
        doi = {10.1038/ncomms2076},
        url = {https://www.nature.com/articles/ncomms2076}
	}

@article{Chiribella2013,
	title = {Quantum computations without definite causal structure},
	author = {Chiribella, Giulio and D'Ariano, Giacomo Mauro and Perinotti, Paolo and Valiron, Benoit},
	journal = {Phys. Rev. A},
	volume = {88},
	issue = {2},
	pages = {022318},
	numpages = {15},
	year = {2013},
	month = {Aug},
	publisher = {American Physical Society},
	doi = {10.1103/PhysRevA.88.022318},
	url = {https://link.aps.org/doi/10.1103/PhysRevA.88.022318}
}

@article{Brukner2014,
		author  = {Brukner, {\v{C}}aslav},
		title   = {Quantum causality},
		journal = {Nature Physics},
		volume  = {10},
		pages   = {259},
		year    = {2014},
        doi  = {10.1038/nphys2930},
        url = {https://doi.org/10.1038/nphys2930}
	}

@article{Procopio2015,
	author  = {Procopio, L. M. and Moqanaki, A. and Ara{\'u}jo, M. and Costa, F. and Calafell, I. A.
	and Dowd, E. G. and Hamel, D. R. and Rozema, L. A. and Brukner, {\v{C}}aslav and Walther, P.},
	title   = {Experimental superposition of orders of quantum gates},
	journal = {Nature Communications},
	volume  = {6},
	pages   = {7913},
	year    = {2015},
    doi = {10.1038/ncomms8913},
    url = {https://www.nature.com/articles/ncomms8913}
}

@article{Rubino2017,
	author = {Giulia Rubino  and Lee A. Rozema  and Adrien Feix  and Mateus Araújo  and Jonas M. Zeuner  and Lorenzo M. Procopio  and Časlav Brukner  and Philip Walther },
	title = {Experimental verification of an indefinite causal order},
	journal = {Science Advances},
	volume = {3},
	number = {3},
	pages = {e1602589},
	year = {2017},
	doi = {10.1126/sciadv.1602589},
	URL = {https://doi.org/10.1126/sciadv.1602589}}

@article{Goswami2018,
	title = {Indefinite Causal Order in a Quantum Switch},
	author = {Goswami, K. and Giarmatzi, C. and Kewming, M. and Costa, F. and Branciard, C. and Romero, J. and White, A. G.},
	journal = {Phys. Rev. Lett.},
	volume = {121},
	issue = {9},
	pages = {090503},
	numpages = {5},
	year = {2018},
	month = {Aug},
	publisher = {American Physical Society},
	doi = {10.1103/PhysRevLett.121.090503},
	url = {https://link.aps.org/doi/10.1103/PhysRevLett.121.090503}
}

@article{Deng2025,
  title = {Generalized Indefinite Causal Orders in an Integrated Quantum Switch},
  author = {Deng, Yaohao and Liu, Shuheng and Chen, Xiaojiong and Fu, Zhaorong and Bao, Jueming and Zheng, Yun and Gong, Qihuang and He, Qiongyi and Wang, Jianwei},
  journal = {Phys. Rev. Lett.},
  volume = {135},
  issue = {16},
  pages = {160202},
  numpages = {7},
  year = {2025},
  month = {Oct},
  publisher = {American Physical Society},
  doi = {10.1103/39vh-84n1},
  url = {https://link.aps.org/doi/10.1103/39vh-84n1}
}

@article{SiddiquiQureshi2021,
	title = {Multipath wave-particle duality with a path detector in a quantum superposition},
	author = {Siddiqui, Mohd Asad and Qureshi, Tabish},
	journal = {Phys. Rev. A},
	volume = {103},
	issue = {2},
	pages = {022219},
	numpages = {6},
	year = {2021},
	month = {Feb},
	publisher = {American Physical Society},
	doi = {10.1103/PhysRevA.103.022219},
	url = {https://link.aps.org/doi/10.1103/PhysRevA.103.022219}
}

@article{afshar2007, title={Paradox in Wave-Particle Duality}, volume={37}, url={http://dx.doi.org/10.1007/s10701-006-9102-8}, DOI={10.1007/s10701-006-9102-8}, number={2}, journal={Foundations of Physics}, publisher={Springer Science and Business Media LLC}, author={Afshar, Shahriar S. and Flores, Eduardo and McDonald, Keith F. and Knoesel, Ernst}, year={2007}, month=jan, pages={295–305} }

@article{Qureshi2023, title={Understanding Modified Two-Slit Experiments Using Path Markers}, volume={53}, url={http://dx.doi.org/10.1007/s10701-023-00684-z}, DOI={10.1007/s10701-023-00684-z}, number={2}, journal={Foundations of Physics}, publisher={Springer Science and Business Media LLC}, author={Qureshi, Tabish}, year={2023}, pages={38}}

@article{Qureshi_2012, title={Modified Two-Slit Experiments and Complementarity}, volume={02}, url={http://dx.doi.org/10.4236/jqis.2012.22007}, DOI={10.4236/jqis.2012.22007}, number={02}, journal={Journal of Quantum Information Science}, publisher={Scientific Research Publishing, Inc.}, author={Qureshi, Tabish}, year={2012}, pages={35–40} }

@article{Berta2010,
author    = {Mario Berta and Matthias Christandl and Roger Colbeck and Joseph M. Renes and Renato Renner},
title     = {The uncertainty principle in the presence of quantum memory},
journal   = {Nature Physics},
year      = {2010},
volume    = {6},
number    = {9},
pages     = {659--662},
doi       = {10.1038/nphys1734},
url       = {https://doi.org/10.1038/nphys1734}
}

@article{Bergou2010,
author = {János A. Bergou},
title = {Discrimination of quantum states},
journal = {Journal of Modern Optics},
volume = {57},
number = {3},
pages = {160--180},
year = {2010},
publisher = {Taylor \& Francis},
doi = {10.1080/09500340903477756},
url = {https://doi.org/10.1080/09500340903477756}
}

@book{Helstrom1976,
author    = {C. W. Helstrom},
title     = {Quantum Detection and Estimation Theory},
publisher = {Academic Press},
address   = {New York},
year      = {1976}
}

@article{Ebler2018,
	title = {Enhanced Communication with the Assistance of Indefinite Causal Order},
	author = {Ebler, Daniel and Salek, Sina and Chiribella, Giulio},
	journal = {Phys. Rev. Lett.},
	volume = {120},
	issue = {12},
	pages = {120502},
	numpages = {5},
	year = {2018},
	month = {Mar},
	publisher = {American Physical Society},
	doi = {10.1103/PhysRevLett.120.120502},
	url = {https://link.aps.org/doi/10.1103/PhysRevLett.120.120502}
}

@article{Chiribella2021,
doi = {10.1088/1367-2630/abe7a0},
url = {https://doi.org/10.1088/1367-2630/abe7a0},
year = {2021},
month = {mar},
publisher = {IOP Publishing},
volume = {23},
number = {3},
pages = {033039},
author = {Chiribella, Giulio and Banik, Manik and Bhattacharya, Some Sankar and Guha, Tamal and Alimuddin, Mir and Roy, Arup and Saha, Sutapa and Agrawal, Sristy and Kar, Guruprasad},
title = {Indefinite causal order enables perfect quantum communication with zero capacity channels},
journal = {New Journal of Physics}
}

@article{Chefles2000,
author = {Anthony Chefles},
title = {Quantum state discrimination},
journal = {Contemporary Physics},
volume = {41},
number = {6},
pages = {401--424},
year = {2000},
publisher = {Taylor \& Francis},
doi = {10.1080/00107510010002599},
url = { https://doi.org/10.1080/00107510010002599}
}

@article{PaulQureshi2017,
  title = {Measuring quantum coherence in multislit interference},
  author = {Paul, Tania and Qureshi, Tabish},
  journal = {Phys. Rev. A},
  volume = {95},
  issue = {4},
  pages = {042110},
  numpages = {6},
  year = {2017},
  month = {Apr},
  publisher = {American Physical Society},
  doi = {10.1103/PhysRevA.95.042110},
  url = {https://link.aps.org/doi/10.1103/PhysRevA.95.042110}
}

@article{Chiribella2012,
  title = {Perfect discrimination of no-signalling channels via quantum superposition of causal structures},
  author = {Chiribella, Giulio},
  journal = {Phys. Rev. A},
  volume = {86},
  issue = {4},
  pages = {040301},
  numpages = {5},
  year = {2012},
  month = {Oct},
  publisher = {American Physical Society},
  doi = {10.1103/PhysRevA.86.040301},
  url = {https://link.aps.org/doi/10.1103/PhysRevA.86.040301}
}

@article{Goswami2020,
  title = {Increasing communication capacity via superposition of order},
  author = {Goswami, K. and Cao, Y. and Paz-Silva, G. A. and Romero, J. and White, A. G.},
  journal = {Phys. Rev. Res.},
  volume = {2},
  issue = {3},
  pages = {033292},
  numpages = {9},
  year = {2020},
  month = {Aug},
  publisher = {American Physical Society},
  doi = {10.1103/PhysRevResearch.2.033292},
  url = {https://link.aps.org/doi/10.1103/PhysRevResearch.2.033292}
}

@article{Bergou2007,
doi = {10.1088/1742-6596/84/1/012001},
url = {https://doi.org/10.1088/1742-6596/84/1/012001},
year = {2007},
month = {oct},
publisher = {},
volume = {84},
number = {1},
pages = {012001},
author = {János A Bergou},
title = {Quantum state discrimination and selected applications},
journal = {Journal of Physics: Conference Series}
}

@book{Boyd2004, place={Cambridge}, title={Convex Optimization}, publisher={Cambridge University Press}, author={Boyd, Stephen and Vandenberghe, Lieven}, year={2004}}

@article{Sharma2020,
title = {Robustness of interferometric complementarity under decoherence},
journal = {Physics Letters A},
volume = {384},
number = {15},
pages = {126297},
year = {2020},
issn = {0375-9601},
doi = {https://doi.org/10.1016/j.physleta.2020.126297},
url = {https://www.sciencedirect.com/science/article/pii/S0375960120300967},
author = {Gautam Sharma and Mohd Asad Siddiqui and Shiladitya Mal and Sk Sazim and Aditi Sen(De)}
}

@article{Duerr2001,
  title = {Quantitative wave-particle duality in multibeam interferometers},
  author = {D\"urr, Stephan},
  journal = {Phys. Rev. A},
  volume = {64},
  issue = {4},
  pages = {042113},
  numpages = {9},
  year = {2001},
  month = {Sep},
  publisher = {American Physical Society},
  doi = {10.1103/PhysRevA.64.042113},
  url = {https://link.aps.org/doi/10.1103/PhysRevA.64.042113}
}

@article{Bagan2016,
  title = {Relations between Coherence and Path Information},
  author = {Bagan, Emilio and Bergou, J\'anos A. and Cottrell, Seth S. and Hillery, Mark},
  journal = {Phys. Rev. Lett.},
  volume = {116},
  issue = {16},
  pages = {160406},
  numpages = {5},
  year = {2016},
  month = {Apr},
  publisher = {American Physical Society},
  doi = {10.1103/PhysRevLett.116.160406},
  url = {https://link.aps.org/doi/10.1103/PhysRevLett.116.160406}
}

@article{Roy2022,
  title = {Coherence, path predictability, and $I$ concurrence: A triality},
  author = {Roy, Abhinash Kumar and Pathania, Neha and Chandra, Nitish Kumar and Panigrahi, Prasanta K. and Qureshi, Tabish},
  journal = {Phys. Rev. A},
  volume = {105},
  issue = {3},
  pages = {032209},
  numpages = {7},
  year = {2022},
  month = {Mar},
  publisher = {American Physical Society},
  doi = {10.1103/PhysRevA.105.032209},
  url = {https://link.aps.org/doi/10.1103/PhysRevA.105.032209}
}

@article{Scully1982,
  title = {Quantum eraser: A proposed photon correlation experiment concerning observation and ``delayed choice" in quantum mechanics},
  author = {Scully, Marlan O. and Dr\"uhl, Kai},
  journal = {Phys. Rev. A},
  volume = {25},
  issue = {4},
  pages = {2208--2213},
  numpages = {0},
  year = {1982},
  month = {Apr},
  publisher = {American Physical Society},
  doi = {10.1103/PhysRevA.25.2208},
  url = {https://link.aps.org/doi/10.1103/PhysRevA.25.2208}
}

@article{Walborn2002,
  title = {Double-slit quantum eraser},
  author = {Walborn, S. P. and Terra Cunha, M. O. and P\'adua, S. and Monken, C. H.},
  journal = {Phys. Rev. A},
  volume = {65},
  issue = {3},
  pages = {033818},
  numpages = {6},
  year = {2002},
  month = {Feb},
  publisher = {American Physical Society},
  doi = {10.1103/PhysRevA.65.033818},
  url = {https://link.aps.org/doi/10.1103/PhysRevA.65.033818}
}

@article{SiddiquiQureshi2016,
  author  = {Mohd Asad Siddiqui and Tabish Qureshi},
  title   = {A nonlocal wave--particle duality},
  journal = {Quantum Studies: Mathematics and Foundations},
  year    = {2016},
  volume  = {3},
  number  = {1},
  pages   = {115--122},
  doi     = {10.1007/s40509-015-0064-4},
  url = {https://link.springer.com/article/10.1007/s40509-015-0064-4}
}

@article{Qian2018,
author = {X.-F. Qian and A. N. Vamivakas and J. H. Eberly},
journal = {Optica},
number = {8},
pages = {942--947},
publisher = {Optica Publishing Group},
title = {Entanglement limits duality and vice versa},
volume = {5},
month = {Aug},
year = {2018},
url = {https://opg.optica.org/optica/abstract.cfm?URI=optica-5-8-942},
doi = {10.1364/OPTICA.5.000942},
}

@article{Jakob2010,
title = {Quantitative complementarity relations in bipartite systems: Entanglement as a physical reality},
journal = {Optics Communications},
volume = {283},
number = {5},
pages = {827-830},
year = {2010},
issn = {0030-4018},
doi = {https://doi.org/10.1016/j.optcom.2009.10.044},
url = {https://www.sciencedirect.com/science/article/pii/S0030401809010360},
author = {Matthias Jakob and János A. Bergou}
}

@article{Coles2014,
  author    = {Patrick J. Coles and Jedrzej Kaniewski and Stephanie Wehner},
  title     = {Equivalence of wave--particle duality to entropic uncertainty},
  journal   = {Nature Communications},
  volume    = {5},
  pages     = {5814},
  year      = {2014},
  doi       = {10.1038/ncomms6814}
}

@article{Ivanovic1987,
title = {How to differentiate between non-orthogonal states},
journal = {Physics Letters A},
volume = {123},
number = {6},
pages = {257-259},
year = {1987},
issn = {0375-9601},
doi = {https://doi.org/10.1016/0375-9601(87)90222-2},
url = {https://www.sciencedirect.com/science/article/pii/0375960187902222},
author = {I.D. Ivanovic}
}

@article{Dieks1988,
title = {Overlap and distinguishability of quantum states},
journal = {Physics Letters A},
volume = {126},
number = {5},
pages = {303-306},
year = {1988},
issn = {0375-9601},
doi = {https://doi.org/10.1016/0375-9601(88)90840-7},
url = {https://www.sciencedirect.com/science/article/pii/0375960188908407},
author = {D. Dieks}
}

@article{Peres1988,
title = {How to differentiate between non-orthogonal states},
journal = {Physics Letters A},
volume = {128},
number = {1},
pages = {19},
year = {1988},
issn = {0375-9601},
doi = {https://doi.org/10.1016/0375-9601(88)91034-1},
url = {https://www.sciencedirect.com/science/article/pii/0375960188910341},
author = {Asher Peres}
}
	\end{document}